\newtheorem{lemma}{Lemma}
\newtheorem{theorem}{Theorem}
\newtheorem{definition}{Definition}
\newtheorem{corollary}{Corollary}
\newsavebox{\@brx}
\newcommand{\llangle}[1][]{\savebox{\@brx}{$\m@th{#1\langle}$}%
  \mathopen{\copy\@brx\kern-0.5\wd\@brx\usebox{\@brx}}}
\newcommand{\rrangle}[1][]{\savebox{\@brx}{$\m@th{#1\rangle}$}%
  \mathclose{\copy\@brx\kern-0.5\wd\@brx\usebox{\@brx}}}
\newcommand{\T}{\mathrm{T}}
\begin{document}

\title{Learning functions of Hamiltonians with Hamiltonian Fourier features}

\author{Yuto Morohoshi}
\email{u542400c@ecs.osaka-u.ac.jp}
\affiliation{Graduate School of Engineering Science, The University of Osaka, 1-3 Machikaneyama, Toyonaka, Osaka 560-8531, Japan}

\author{Akimoto Nakayama}
\email{akimoto.nakayama@gmail.com}
\affiliation{Graduate School of Engineering Science, The University of Osaka, 1-3 Machikaneyama, Toyonaka, Osaka 560-8531, Japan}
\affiliation{Center for Quantum Information and Quantum Biology, The University of Osaka, Japan}

\author{Hidetaka Manabe}
\affiliation{Graduate School of Engineering Science, The University of Osaka, 1-3 Machikaneyama, Toyonaka, Osaka 560-8531, Japan}

\author{Kosuke Mitarai}
\email{mitarai.kosuke.es@osaka-u.ac.jp}
\affiliation{Graduate School of Engineering Science, The University of Osaka, 1-3 Machikaneyama, Toyonaka, Osaka 560-8531, Japan}
\affiliation{Center for Quantum Information and Quantum Biology, The University of Osaka, Japan}

\date{\today}

\begin{abstract}
We propose a quantum machine learning task that is provably easy for quantum computers and arguably hard for classical ones.  
The task involves predicting quantities of the form $\mathrm{Tr}[f(H)\rho]$, where $f$ is an unknown function, given descriptions of $H$ and $\rho$.
Using a Fourier-based feature map of Hamiltonians and linear regression, we theoretically establish the learnability of the task and implement it on a superconducting device using up to 40 qubits.  
This work provides a machine learning task with practical relevance, provable quantum easiness, and near-term feasibility.  
\end{abstract}

\maketitle

\section{Introduction}
Machine learning (ML) has become an indispensable tool in various fields.
Quantum machine learning is a field that aims to enhance machine learning techniques by utilizing quantum computers. 
While the field has found many algorithms with potential quantum speedups \cite{Cerezo2022-xy,Biamonte2017-ht,Cerezo2021-gk}, what type of practical ML tasks are provably advantageous to perform on quantum computers remains an open question.

Nonetheless, several ML tasks that separate classical and quantum capabilities have been found, although they may not be practically relevant.
For example, Servedio and Gortler \cite{Servedio2004Equivalences} have shown that there is an ML task that is provably hard to solve on classical computers but can be performed efficiently on quantum computers, relying on the result of \cite{Kearns1994} which has shown one can construct a classically-hard ML task based on integer factoring.
More recently, Liu \textit{et al.} \cite{Liu2021-pe} have proposed another task based on discrete logarithm and shown rigorously that the idea of quantum feature \cite{Schuld2019Quantum,Havlicek2019Supervised,Cerezo2021-gk} can solve it efficiently.
The following works \cite{YamasakiIsogaiMurao2023,Gyurik2023Exponential} have provided a more general framework for constructing such tasks from general computational problems that are efficiently solvable by quantum computers, although they still lack very practical implications.
A learning task with possible practicality has been proposed by Molteni \textit{et al.} \cite{MolteniGyurikDunjko2024}.
They have suggested a learning task that asks us to predict expectation values of an unknown observable for known quantum states and shown that it is a classically hard but quantumly easy task.

In this work, we propose another ML task with potential practical implications that is provably easy to perform on quantum computers.
% The model that can solve the proposed ML task can be experimentally implemented on current quantum devices.
The proposed ML task is intuitively described as follows: we know the Hamiltonian $H$ of a quantum system under experimental investigation, but we do not know the relationship between the Hamiltonian and the experimental outputs.
More specifically, it asks us to predict $y=\mathrm{Tr}[f(H)\rho]$ for some unknown function $f:\mathbb{R}\to\mathbb{R}$, while both $H$ and $\rho$ are drawn from a probability distribution and their description is given to the learner.
This task can be solved efficiently on quantum computers given that $f$ admits an efficient Fourier series expansion.
We can efficiently estimate $\mathrm{Tr}[e^{-iHt}\rho]$ on quantum computers for different $t$ and perform linear regression using the estimated values as features to construct a model that accurately predicts $y$.
On the other hand, we argue that it is hard to solve this task on classical computers for certain $f$'s and distributions over Hamiltonians and states.
Moreover, we experimentally construct the model using superconducting qubits by using an experimentally friendly technique that does not require controlled-$e^{-iHt}$ operation to estimate $\mathrm{Tr}[e^{-iHt}\rho]$, which is challenging for current quantum devices with limited connectivity and coherence time.
With up to 40-qubit experiments, we demonstrate that the model can be trained and provide accurate predictions even under the intrinsic decoherence and statistical error of the quantum devices.
Our work paves the way for finding practical ML tasks that are advantageous to perform on quantum computers.

\begin{figure}
    \centering
    \includegraphics[width=\linewidth]{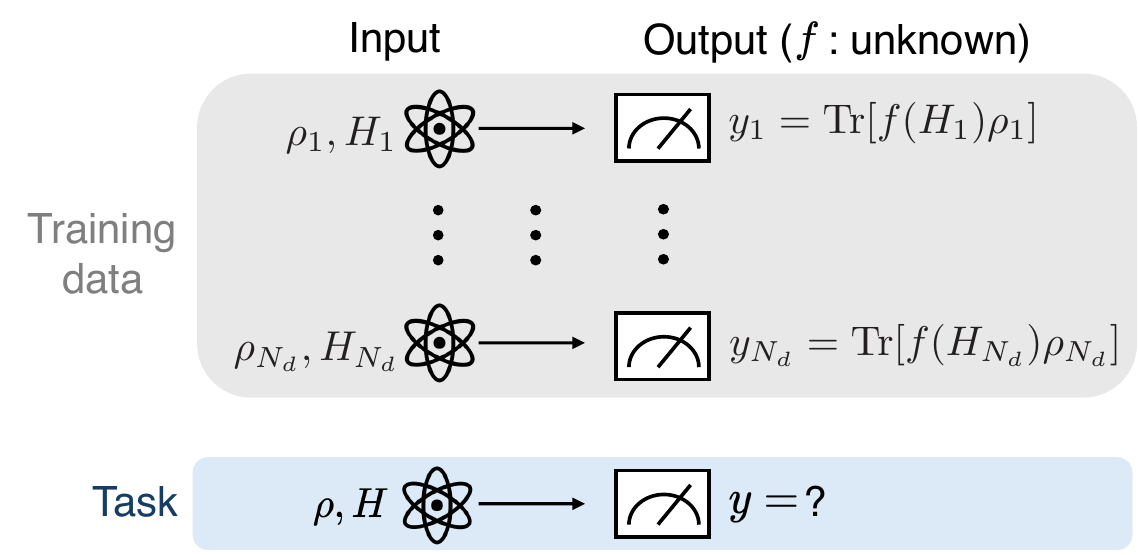}
    \caption{Overview of the ML task proposed in this work. 
    A learner is given a dataset \(\{(H_j, \rho_j, y_j)\}_{j=1}^{N_d}\), consisting of \(n\)-qubit Hamiltonians \(H_j\), \(n\)-qubit states \(\rho_j\), and real-valued labels \(y_j\), where $\rho_j$ and $H_j$ are independent samples from a probability distribution $p(\rho,H)$ and $y_j$ is promised to satisfy $y_j = \mathrm{Tr}[f(H_j)\rho_j]$ for some unknown function $f$. The task of a learner is to predict the value of $y$ for a sample $(\rho, H)$ drawn from $p(\rho, H)$ with a small error.}
    \label{fig:overview}
\end{figure}

\section{Theory}

We first describe the learning task we consider in this work.
We are given a dataset $\{(H_j, \rho_j, y_j)\}_{j=1}^{N_d}$ consisting of $n$-qubit Hamiltonians $H_j$, $n$-qubit states $\rho_j$, and real numbers $y_j$.
Without loss of generality, we assume $\|H\|\leq C$, and the Hamiltonians and states are drawn independently from a probability distribution $p(H,\rho)$.
Importantly, here we assume $p(H,\rho)$ only has support on Hamiltonians and states that admit efficient description, by which we mean one needs only $\mathrm{poly}(n)$ quantum computational resources to conduct Hamiltonian simulation and prepare states.
The real numbers $y$ are promised to be generated via
\begin{align}
    y = \Tr \left[f(H)\rho\right],
\end{align}
where $f:[-C, C]\to\mathbb{R}$ is an unknown function.
The task is to predict $y$ for new Hamiltonians $H$ and states $\rho$ drawn from $p(H,\rho)$.
More precisely, we wish to construct a model $g(H,\rho;\bm{w})$ with trainable parameter $\bm{w}$ such that the expected mean squared loss,
\begin{align}
    R(\bm{w}) = \mathbb{E}_{H, \rho \sim p(H, \rho)}\left[L(\mathrm{Tr}[f(H)\rho], g(H,\rho;\bm{w}))\right],
\end{align}
where $L(y, y')=(y-y')^2$, is small.

This task is easy for learners equipped with quantum computers with the following strategy.
First, we define a feature vector $\bm{x}(H,\rho)=(x_0~\cdots~x_{2K})\in \mathbb{R}^{2K+1}$ for a Hamiltonian $H$ and a state $\rho$ as
\begin{align}\label{eq:fourier-feature}
    x_{k}(H,\rho) &= \begin{cases}
    x_{\mathrm{cos}, k/2} & \text{(even $k$)} \\
    x_{\mathrm{sin}, (k+1)/2} & \text{(odd $k$)} % 
    \end{cases},
\end{align}
for $k=0$ to $2K$, where
\begin{align}
    x_{\mathrm{cos}, l} &= \mathrm{Re}\left[\mathrm{Tr}\left(e^{-il\pi H/C}\rho\right)\right] = \mathrm{Tr}\left[\cos\left(k\pi H/C\right)\rho\right] \\
    x_{\mathrm{sin}, l} &= \mathrm{Im}\left[\mathrm{Tr}\left(e^{-il\pi H/C}\rho\right)\right]
    = -\mathrm{Tr}\left[\sin\left(k\pi H/C\right)\rho\right]
\end{align}
These features, which we refer to as Hamiltonian Fourier features, are efficiently computable on a quantum computer using, for example, Hadamard tests. 
Using $\bm{x}(H,\rho)$, we define a linear model
\begin{align}\label{eq:model}
    g(H,\rho;\bm{w}) = \bm{w}\cdot\bm{x}(H,\rho),
\end{align}
where $\bm{w}\in\mathbb{R}^{2K+1}$ are the trainable parameters.
We can see that this model can express a wide range of functions on $[-C, C]$, noting that $g(H, \rho;\bm{w})$ is essentially a Fourier series.
It is well known that one can construct a convergent Fourier series for piecewise continuous functions.
We can thus expect the model \eqref{eq:model} to be a ``good'' model for this task.
More formally, we can prove the following.

\begin{theorem}\label{thm:quantum-easiness}
    Let $ g(H,\rho;\bm{w}) = \sum_{k=0}^K w_{k}x_{k} $ where $x_k$ is defined as in Eq.~\eqref{eq:fourier-feature}, $ L(y, y') = |y - y'|^2 $ be the loss function, and $ f:[-C, C]\to\mathbb{R} $ be a function that admits a Fourier series expansion in the sense that there exists $\{c_k\}_{k=0}^{2K}\in \mathbb{R}^{2K+1}$ such that $|f(x) - (\sum_{k=0}^K c_{2k}\cos(k\pi x/C)+\sum_{k=0}^{K-1} c_{2k+1}\sin(k\pi x/C))|\leq \varepsilon_K$ for all $x\in [-C, C]$ and $\sum_{k=0}^K c_k^2 \leq W^2$ for some constant $\varepsilon_K>0$ and $W=\frac{1}{2C}\int_{-C}^C |f(x)|^2 dx$. 
    Define the expected loss as $R(\bm{w}) = \mathbb{E}_{H,\rho \sim p(H,\rho)}[L(\mathrm{Tr}[f(H)\rho], g(H,\rho;\bm{w}))] $, where $ p(H, \rho) $ is a distribution over Hamiltonians and states satisfying $ \|H\| \leq C $.
    For a set of $ N_d $ samples $ \{(H_i,\rho_i)\}_{i=1}^{N_d} $ drawn independently from $ p(H,\rho) $, with probability at least $ 1-\delta $, the expected loss satisfies
    \begin{multline}\label{eq:expected-loss-bound}
    R(\bm{w}^*) \leq \varepsilon_K^2 
    + 4(\sqrt{2K+1} W + \|f\|_\infty )W 
    \sqrt{\frac{2K+1}{N_d}} \\
    + 3(\sqrt{2K+1} W + \|f\|_\infty )^2 
    \sqrt{\frac{\log \frac{2}{\delta}}{2N_d}},
\end{multline}
    where $ \bm{w}^* $ is the minimizer of the empirical loss $ \hat{R}(\bm{w}) = \frac{1}{N_d} \sum_{j=1}^{N_d} L(\mathrm{Tr}[f(H_j)\rho_j], g(H_j,\rho_j;\bm{w})) $ under the constraint $\bm{w}^T\bm{w}\leq W^2$.
\end{theorem}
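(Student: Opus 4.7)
The plan is to apply the standard approximation-estimation decomposition from statistical learning theory, using the vector $\bm{c}$ of Fourier coefficients provided by the hypothesis as a comparator inside the feasible set $\mathcal{W} = \{\bm{w} : \bm{w}^\T \bm{w} \leq W^2\}$. First I would observe that $\bm{c}\in\mathcal{W}$ and that, because $\rho$ is a density operator (so $\|\rho\|_1 = 1$), applying the uniform Fourier approximation of $f$ via functional calculus gives the pointwise bound
\begin{equation*}
\left|\mathrm{Tr}[f(H)\rho] - g(H,\rho;\bm{c})\right| \le \varepsilon_K
\end{equation*}
for every $(H,\rho)$ in the support of $p$ (up to a harmless sign flip on the odd-indexed coefficients absorbed into $\bm{c}$), yielding the first term $R(\bm{c}) \le \varepsilon_K^2$ of the bound.

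Next I would invoke the oracle identity
\begin{equation*}
R(\bm{w}^*) - R(\bm{c}) \le \bigl[R(\bm{w}^*) - \hat R(\bm{w}^*)\bigr] + \bigl[\hat R(\bm{c}) - R(\bm{c})\bigr],
\end{equation*}
which follows from $\hat R(\bm{w}^*) \le \hat R(\bm{c})$ by the constrained empirical risk minimization property. The first bracket is controlled by $\sup_{\bm{w}\in\mathcal{W}}[R(\bm{w}) - \hat R(\bm{w})]$, and the second by Hoeffding's inequality applied to a single fixed hypothesis.

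The main technical step is the Rademacher-complexity bound on the uniform deviation. The ingredients are: the elementary bounds $|x_k|\le 1$ and hence $\|\bm{x}\|_2\le \sqrt{2K+1}$ and $|g(H,\rho;\bm{w})|\le W\sqrt{2K+1}$ for $\bm{w}\in\mathcal{W}$, so the squared loss is bounded by $M := (W\sqrt{2K+1}+\|f\|_\infty)^2$ and is Lipschitz in its second argument with constant $2(W\sqrt{2K+1}+\|f\|_\infty)$; the standard Rademacher-complexity bound $W\sqrt{(2K+1)/N_d}$ for the $\ell_2$-norm-constrained linear class; and Talagrand's contraction lemma, which multiplies the previous bound by the Lipschitz constant to give the Rademacher complexity of the composed loss class. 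Symmetrization then controls $\mathbb{E}\bigl[\sup_{\bm{w}}(R(\bm{w})-\hat R(\bm{w}))\bigr]$ by twice this, accounting for the coefficient $4(W\sqrt{2K+1}+\|f\|_\infty) W\sqrt{(2K+1)/N_d}$, while McDiarmid's inequality turns the expectation into a high-probability statement at the cost of an $M\sqrt{\log(2/\delta)/(2N_d)}$ term. A union bound combining this with Hoeffding's bound on $\hat R(\bm{c}) - R(\bm{c})$ produces the final concentration contribution.

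The main obstacle is not conceptual but rather the careful bookkeeping of absolute constants (notably the factors $4$ and $3$) and of how $\delta$ is split across the union bounds so as to reproduce the precise inequality \eqref{eq:expected-loss-bound}; the overall structure of the proof is entirely standard once one identifies the Fourier coefficients as the natural in-class comparator.
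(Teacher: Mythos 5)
Your overall strategy matches the paper's: restrict to the ball $\bm{w}^\T\bm{w}\le W^2$, use the Fourier coefficient vector $\bm{c}$ as the in-class comparator, bound the uniform deviation of the squared loss over the ball via the Rademacher complexity of the $\ell_2$-constrained linear class (the paper invokes a packaged lemma from Mohri et al.\ that already contains the contraction and McDiarmid steps you describe doing by hand, yielding exactly the factors $4M\widehat{\mathfrak{R}}_S$ and $3M^2\sqrt{\log(2/\delta)/(2N_d)}$ with $M=W\sqrt{2K+1}+\|f\|_\infty$), and your bounds $|x_k|\le 1$, $|g|\le W\sqrt{2K+1}$, and the sign-flip remark on the odd-indexed coefficients are all correct. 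The one place you genuinely diverge is the handling of the comparator term, and it is the source of the bookkeeping trouble you anticipate: you propose the two-sided oracle decomposition $R(\bm{w}^*)-R(\bm{c})\le[R(\bm{w}^*)-\hat R(\bm{w}^*)]+[\hat R(\bm{c})-R(\bm{c})]$ and then control $\hat R(\bm{c})-R(\bm{c})$ by Hoeffding plus a union bound. This is valid but produces an extra concentration term and a $\delta$-split that the stated inequality \eqref{eq:expected-loss-bound} does not contain, so your route cannot reproduce the claimed constants. The paper avoids this entirely: since the Fourier approximation error is pointwise, $|\mathrm{Tr}[f(H)\rho]-g(H,\rho;\bm{c})|\le\varepsilon_K$ holds for \emph{every} sample, hence $\hat R(\bm{c})\le\varepsilon_K^2$ deterministically, and one simply writes $R(\bm{w}^*)\le\hat R(\bm{w}^*)+\sup_{\bm{w}}[R(\bm{w})-\hat R(\bm{w})]\le\hat R(\bm{c})+\sup_{\bm{w}}[R(\bm{w})-\hat R(\bm{w})]\le\varepsilon_K^2+\cdots$, with no second concentration event and no union bound. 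Replacing your Hoeffding step on the comparator by this deterministic empirical bound recovers the theorem exactly; everything else in your outline is the paper's argument.
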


Its proof can be found in the Appendix \ref{sec:proof-quantum-easiness}.
The second and third terms in \eqref{eq:expected-loss-bound} can be made arbitrarily small by setting, e.g.,  $N_d=\mathcal{O}(K^2W^4\|f\|_\infty^4)$.
The first term, $\varepsilon_K$, can be made small by taking sufficiently large $K$.
It depends on the class of functions that we wish to learn.
For example, if we restrict $f$ to be Lipschitz continuous, then $\varepsilon_K \leq \mathcal{O}\left(\frac{\log K}{K}\right)$ for all $x\in[-C,C]$ \cite{SalemZygmund1946}.
In this case, we can set $K = \mathcal{O}\left(\log( 1/\varepsilon)/\varepsilon\right)$ to ensure $\varepsilon_K \leq \varepsilon$ for an error $\varepsilon$ we wish to achieve. 
These arguments imply $N_d=\mathcal{O}((W\|f\|_\infty\log(1/\varepsilon)/\varepsilon)^4)$ data are sufficient to construct a model that achieves expected loss less than $\varepsilon$.
This shows that the task at hand is provably easy on quantum computers, given that the unknown function $f$ admits an efficient Fourier series expansion. 

The above discussion can be summarized as the following corollary.
\begin{corollary}
\label{thm:parameter-setting}
Let the symbols be defined as in Theorem \ref{thm:quantum-easiness} but assume $f$ to be Lipschitz continuous.
Then, for any \(\varepsilon > 0\), we can take $K = \mathcal{O}\left(\frac{\log(1/\varepsilon)}{\varepsilon}\right)$ and $N_d = \mathcal{O} \left(\left(\frac{W\|f\|_\infty \log(1/\varepsilon)}{\varepsilon}\right)^4\right)$
to achieve $R(\bm{w}^*) \le \varepsilon$ with probability at least \(1-\delta\).
\end{corollary}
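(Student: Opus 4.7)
The plan is to substitute the Lipschitz continuity assumption into the bound of Theorem~\ref{thm:quantum-easiness} and then choose $K$ and $N_d$ so that each of the three terms on the right-hand side of \eqref{eq:expected-loss-bound} is $\mathcal{O}(\varepsilon)$. No new high-probability event needs to be introduced, so the confidence parameter $\delta$ is inherited directly from Theorem~\ref{thm:quantum-easiness}.

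First I would control the approximation error $\varepsilon_K$. By the Salem--Zygmund estimate~\cite{SalemZygmund1946} cited in the paragraph preceding the corollary, every Lipschitz continuous function on $[-C,C]$ is uniformly approximated by its $K$-th Fourier partial sum at rate $\varepsilon_K = \mathcal{O}(\log K / K)$. Setting $K = \Theta(\log(1/\varepsilon)/\varepsilon)$ then yields $\varepsilon_K = \mathcal{O}(\varepsilon)$, so the first term of~\eqref{eq:expected-loss-bound} contributes $\varepsilon_K^2 = \mathcal{O}(\varepsilon^2)$, which is dominated by the target error $\varepsilon$.

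Next I would control the two statistical terms. Both scale like $(\sqrt{2K+1}\,W + \|f\|_\infty)^2/\sqrt{N_d}$ up to a $\sqrt{\log(2/\delta)}$ factor and universal constants, so forcing them to be $\mathcal{O}(\varepsilon)$ requires $N_d = \Omega\bigl((KW^2 + \|f\|_\infty^2)^2/\varepsilon^2\bigr)$. Substituting $K = \Theta(\log(1/\varepsilon)/\varepsilon)$ and using the fact that $W \le \|f\|_\infty$ (which follows from $W^2 = \frac{1}{2C}\int_{-C}^{C}|f(x)|^2\,dx \le \|f\|_\infty^2$) yields a bound that is majorized by the stated $N_d = \mathcal{O}\bigl((W\|f\|_\infty \log(1/\varepsilon)/\varepsilon)^4\bigr)$. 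The three contributions then sum to at most $\mathcal{O}(\varepsilon)$, which can be made to equal $\varepsilon$ by absorbing constants into the $\mathcal{O}$-notation of the parameter choices.

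There is no serious obstacle in this argument; it is a routine substitution and inequality chase starting from the already-proved Theorem~\ref{thm:quantum-easiness}. The only care required is the bookkeeping of polylogarithmic factors and the verification that the choice $K = \Theta(\log(1/\varepsilon)/\varepsilon)$ simultaneously forces $\varepsilon_K^2 \le \varepsilon$ and keeps the $K^2$ factor inside $N_d$ at the claimed order; the resulting $N_d$ is loose but valid, which is all that the $\mathcal{O}$-statement of the corollary requires.
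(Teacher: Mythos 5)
Your proof follows the same route as the paper's (the paper's ``proof'' is just the informal paragraph preceding the corollary): invoke the Salem--Zygmund rate $\varepsilon_K=\mathcal{O}(\log K/K)$ for Lipschitz $f$, pick $K=\Theta(\log(1/\varepsilon)/\varepsilon)$, and then size $N_d$ so the two statistical terms of Eq.~\eqref{eq:expected-loss-bound} are $\mathcal{O}(\varepsilon)$, with the same level of looseness in absorbing $\|f\|_\infty$, $W$, and $\log(2/\delta)$ factors into the $\mathcal{O}$-notation that the paper itself employs. The only quibble is your side remark that $W\le\|f\|_\infty$ follows from $W^2=\frac{1}{2C}\int_{-C}^{C}|f(x)|^2\,dx$ --- the theorem as written defines $W$ (not $W^2$) to be that integral, though this appears to be a typo in the paper and does not affect the substance of your argument.
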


While the above analysis assumes noiseless estimation of the features $\bm{x}(H,\rho)$, practical quantum hardware inevitably introduces statistical fluctuations due to finite measurement shots.
In Appendix~\ref{sec:proof-statistical-easy}, we show that the learnability of the proposed task remains under such statistical noise, by bounding the deviation of the expected loss in the presence of measurement errors.
Specifically, we prove that even when each feature is estimated with bounded additive noise, one can still construct a model whose expected loss is arbitrarily small, provided that the number of measurement shots scales polynomially with the desired accuracy.
This guarantees that the proposed quantum learning scheme is robust against practical imperfections.
 
The classical hardness is clear from the following argument.
For example, if $f(H)=\cos (t H)$ or $f(H)=\sin (t H)$, the problem is essentially equivalent to the quantum simulation problem, which is strongly believed to be a hard task for classical computers to perform for $t=\mathrm{poly}(n)$ \cite{feynman1986quantum, cifuentes2024quantum}.
We leave it as an open question to show this rigorously.
The challenge here is that we need to identify a family of Hamiltonians for which it is hard to estimate $\Tr[f(H)\rho]$ on \textit{average}.
As usual theoretical arguments about the complexity of Hamiltonian simulations refer to worst-case complexity, we cannot straightforwardly show the hardness.
However, we believe it is possible to show classical hardness by, for example, assuming the average case classical hardness of integer factoring and defining a machine learning task like the one used in \cite{Liu2021-pe}; we can rely on the history state Hamiltonians of quantum circuits of Shor's algorithm to fit it in our machine learning task.
We leave it as an interesting topic for future research.

\section{Experimental Demonstration}

We now proceed to demonstrate the successful usage of current quantum computers for the proposed ML task by conducting an experiment on the IBM Quantum platform.

\subsection{Strategy to measure features}
The central challenge in experimentally constructing the model \eqref{eq:model} is the computation of the features $\bm{x}(H, \rho)$.
A popular technique for estimating the values is the Hadamard test, which involves controlled time evolution.
This, however, typically results in a complex quantum circuit that exceeds the capabilities of current quantum hardware.
Thus, we adopt the strategy that works for certain Hamiltonians from \cite{kyriienko2020quantum} in this work.
Let us assume $\rho=\ket{\psi}\bra{\psi}$ for some $\ket{\psi}$.
We make use of an eigenstate $\ket{\psi_\text{ref}}$, corresponding to an eigenvalue $\lambda_{\text{ref}}$ of $H$, and assume that $\braket{\psi|\psi_\text{ref}} = 0$.
Note that we assume we can find this reference eigenstate classically in polynomial time with respect to the system size.
It is possible for, e.g., the Heisenberg model or quantum chemistry Hamiltonians, where the system has symmetry in particle numbers.
Further, define $\ket{\psi_{\pm}} = \frac{\ket{\psi_\text{ref}}\pm\ket{\psi}}{\sqrt{2}}$, $\ket{\psi_{\pm i}} = \frac{\ket{\psi_\text{ref}}\pm i\ket{\psi}}{\sqrt{2}}$, $w_{\pm} = |\braket{\psi_{\pm}|e^{-iHt}|\psi_{+}}|^2$, and $w_{\pm i} = |\braket{\psi_{\pm i}|e^{-iHt}|\psi_{+}}|^2$.
Then, we have
\begin{align}
    \mathrm{Tr}[e^{-iHt}\rho] &= \left[w_+-w_- + i(w_{+i}-w_{-i})\right]e^{-i\lambda_{\rm ref} t}\label{eq:feature-extractor}
\end{align}
Note that we can measure $w_{\pm}$, and $w_{\pm i}$ on a quantum computer if we can efficiently implement unitaries that prepare $\ket{\psi_{\pm}}$, $\ket{\psi_{\pm i}}$ from an initial state $\ket{0}^{\otimes n}$ as a quantum circuit.
Thus, Eq.~\eqref{eq:feature-extractor} provides an experimentally friendly route to estimate elements of $\bm{x}(H, \rho)$.

\subsection{Demonstration setup}
We demonstrate the method under the following conditions.
For dataset, we generate the one-dimensional Heisenberg model $H=\sum_{m=0}^{n-2} J_{m}\left(X_m X_{m+1}+Y_m Y_{m+1}+Z_m Z_{m+1}\right)$, with randomly selected $\{ J_m \}$ and $n=12, 32, 40$. 
The coefficients $\{ J_m \}$ are randomly drawn from a uniform distribution over $[-1,1]$ for all $m$ and subsequently normalized to satisfy $\sum_m | J_m |=1$. Consequently, all Hamiltonians in the dataset satisfy $\| H \|\leq 3$.
The quantum state $\rho$ is set to $\ket{\psi}\bra{\psi}$ with $\ket{\psi}=\ket{0}^{\otimes n/4}\ket{1}^{\otimes n/2}\ket{0}^{\otimes n/4}$ and fixed for all Hamiltonians.
As the reference eigenstate, we use $\ket{\psi_{\rm ref}} = \ket{0}^{\otimes n}$.
By doing so, we can prepare Greenberger–Horne–Zeilinger (GHZ) states with different phases on $n/2$ qubits and initialize other qubits to $\ket{0}$ to create $\ket{\psi_{\pm}}$ and $\ket{\psi_{\pm i}}$.
We adopt $f(H)=e^{-\beta H}$ with $\beta=1$ as a quantity to be learned.
Note that this choice is merely for demonstration purposes, and $f$ can be any function in practice.
In this work, we classically generate the dataset $\{(H_i, \rho_i, y_i)\}_{i=1}^{N_d}$ for each qubit number $n$ by computing $y_j$ via time-evolving block decimation (TEBD) algorithm \cite{vidal2004efficient} using TeNPy \cite{tenpy2024}. 
The bond dimension is set to $\chi=100$, which we verified to give values essentially equivalent to the exact value with an error less than $10^{-6}$ at 12 qubits.
We use \( N_d = 55 \) Hamiltonian samples, which are divided into training and test sets in an \( 8:2 \) ratio.  

For the model hyperparameters, we set \( K=11 \) and $C=3$.
We approximately implement the time evolution \( e^{-iHt} \) by performing a second-order Trotter expansion after partitioning the Hamiltonian into two commuting parts, \( H_{\rm odd} \) and \( H_{\rm even} \), which consist of interaction terms acting on odd and even bonds, respectively.
As a result, we approximate the time evolution by,
\begin{align}
e^{-iH \Delta t} \approx \left( e^{-i\frac{\Delta t}{2}H_\text{odd}} e^{-i \Delta t H_\text{even}} e^{-i\frac{\Delta t}{2}H_\text{odd}} \right)^{n_{\text{step}}},
\end{align}
for some integer $n_{\rm step}$.
We utilize Qiskit to design and execute circuits on an actual quantum processing unit (QPU), specifically the \textit{ibmq\_marrakesh} device of the IBM Quantum platform \cite{ibmquantum, qiskit2024}.
See Appendix \ref{sec:demo-setup} for more details.
For comparison, we also perform noiseless simulations using the matrix product state (MPS) \cite{vidal2003efficient} simulator with a bond dimension of $\chi=100$. Additionally, for the 12-qubit case, we carry out simulations that compute $e^{-iHt}$ exactly without employing the Trotter expansion.

\subsection{Results}
The regression analyses presented in Fig. \ref{fig:demo-result} and Table~\ref{tab:supp-demo-result} evaluate the predictive accuracy of various models for 12, 32, and 40 qubits.
Each plot shows the relationship between actual and predicted values, with results color-coded for different QPU and simulation setups.
The identity line (black) serves as a reference for ideal predictions.
Regression methods are selected based on their mean squared error (MSE) performance on the training dataset, identified using PyCaret \cite{PyCaret}.
For details of experiments, see Appendix \ref{sec:demo-setup}.

We can observe that the result of the exact simulation in Fig.~\ref{subfig:12qubits} shows near-perfect prediction. 
This implies our hyperparameter settings, $K=11$ and $C=3$, have enough representation capability to express $e^{-\beta H}$.
The MPS simulations also consistently achieve near-perfect $R^2$ values across different qubits, demonstrating that the MPS simulation is sufficiently accurate for this task with the chosen Trotter steps and the bond dimension.
Results from QPU under noisy conditions also align well with the identity line, achieving $R^2$ values exceeding 0.80 in the case of 12 and 32 qubits.
For instance, in the 12-qubit system, the regression model achieved an $R^2$ value of 0.977 with an MSE of $2.10 \times 10^{-3}$ on test data.
These findings highlight that the proposed ML task, which we strongly believe is classically intractable for certain families of Hamiltonians, is solvable even under the effect of intrinsic decoherence and statistical noise of quantum hardware.

On the other hand, the 40-qubit experiment does not provide a meaningful prediction as shown in Fig.~\ref{subfig:40qubits}, possibly due to the device noise.
To investigate the effect of noise, Fig.~\ref{fig:SimvsQPU} presents a scatter plot of the values of experimental $w_{\pm}$ and $w_{\pm i}$ against their theoretical values. 
At 12 qubits (Fig.~\ref{subfig:SimvsQPU_12qubits}), we can see that the QPU produces results that closely align with those of the simulator; however, as the system size increases, the discrepancy between the QPU and the simulator becomes increasingly pronounced.
At 40 qubits (Fig.~\ref{subfig:SimvsQPU_40qubits}), most of the experimental values of $w_{\pm}$ and $w_{\pm i}$ do not show much correlation with the theoretical value, suggesting that the measurement has just returned random results in most of the experiments.
We note that prominent peaks appearing at 1.0 for $w_{+}$ and at 0.5 for $w_{\pm i}$ on the horizontal axis correspond to $t = 0$ ($l=0$).
Since no specialized error mitigation techniques were employed in this experiment, the observed hardware errors could potentially be reduced by applying them \cite{cai2023quantum, endo2021hybrid}.

\begin{figure*}
\centering
\subfloat[12 Qubits\label{subfig:12qubits}]{
    \includegraphics[width=0.3\textwidth]{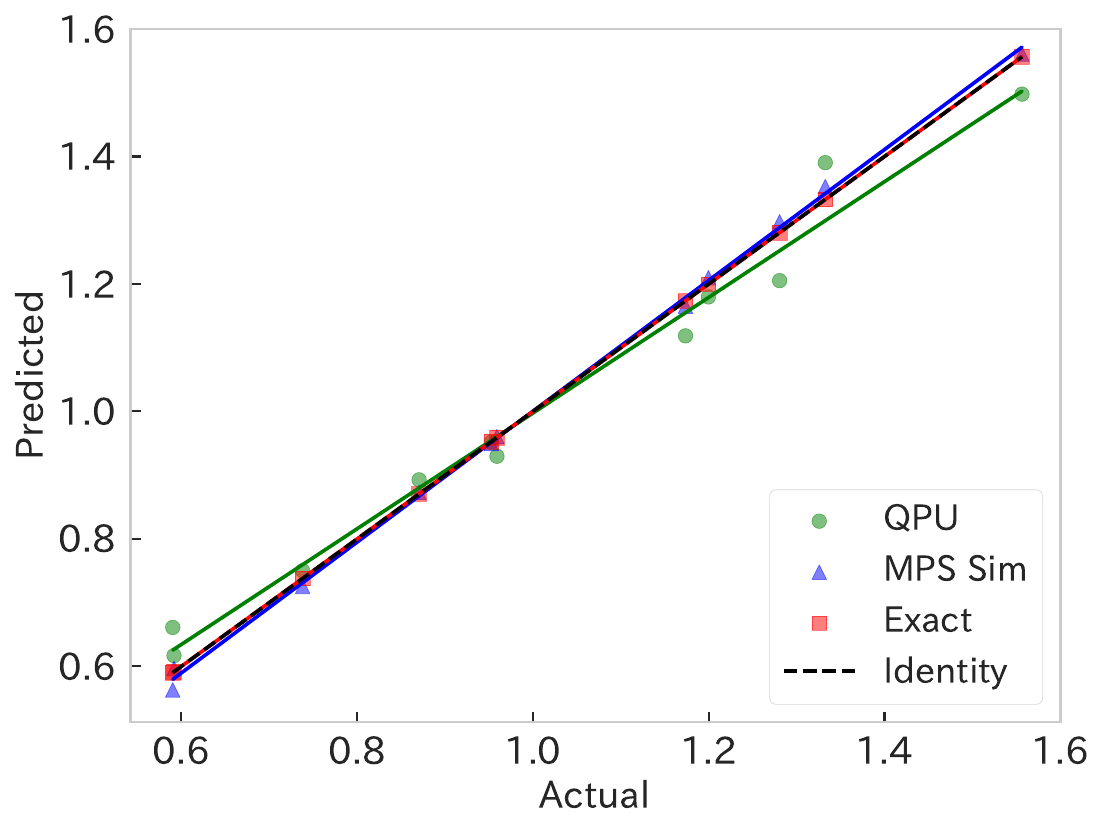}
}
\subfloat[32 Qubits\label{subfig:32qubits}]{
    \includegraphics[width=0.3\textwidth]{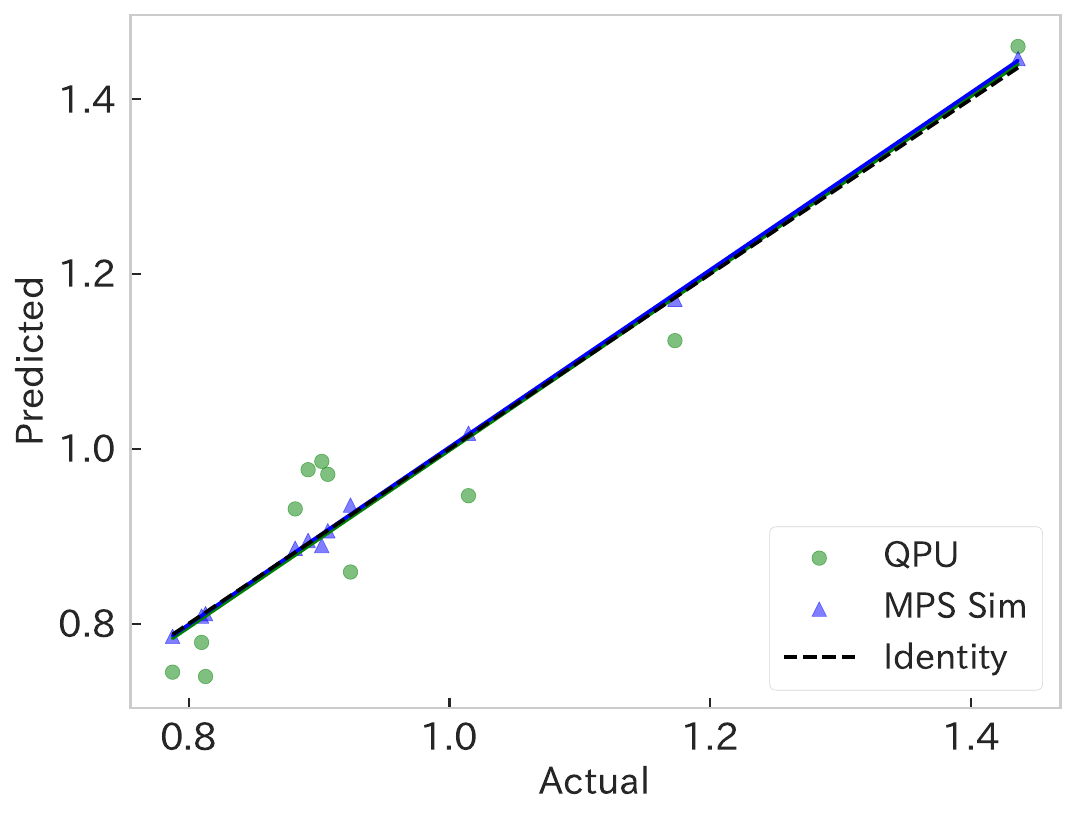}
}
\subfloat[40 Qubits\label{subfig:40qubits}]{
    \includegraphics[width=0.3\textwidth]{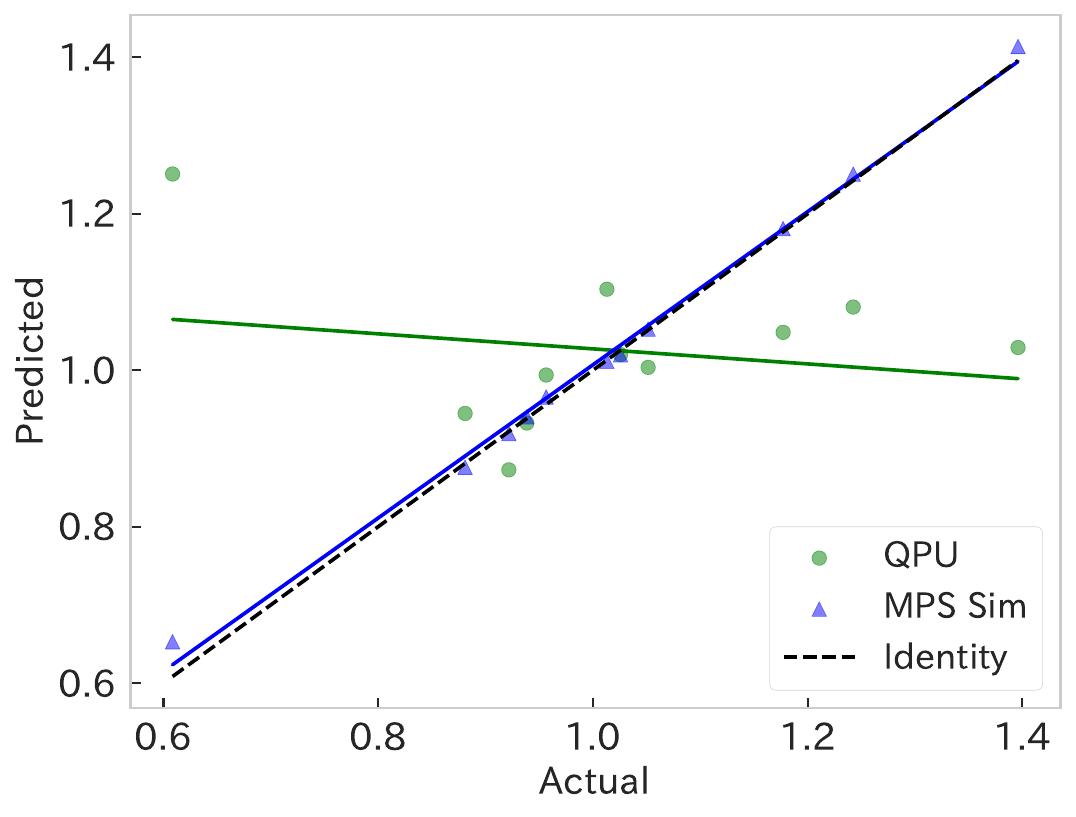}
}
\caption{Demonstration results for 12, 32, and 40 qubits.
    The horizontal axes represent actual $y$ values, while the vertical axes show predicted values. 
    Data points and best-fit lines, which we place for eye-guiding purposes, are color-coded: 
    (green) results from IBM's QPU, \textit{ibmq\_marrakesh},  
    (red) Results from exact matrix exponentiation without the use of Trotter expansion, and
    (blue) Results from the noise-free MPS simulator. Additional details are provided in Table.~\ref{tab:supp-demo-result}.}
\label{fig:demo-result}
\end{figure*}

\begin{table*}[t]
\caption{\label{tab:supp-demo-result}Additional information corresponding to the demonstration results shown in Fig.~\ref{fig:demo-result}. $R^2$ and MSE are calculated on the test dataset.}
\begin{ruledtabular}
\begin{tabular}{ccccc}
\# of qubits & Backend & Regression method & $R^2$ & MSE \\
\hline
12 & Exact & LinearRegression & 1.00 & $1.47\times 10^{-10}$ \\
12 & QPU & BayesianRidge & 0.977 & $2.10\times 10^{-3}$ \\
12 & MPS & BayesianRidge & 0.998 & $1.66\times 10^{-4}$ \\
32 & QPU & ExtraTreesRegressor & 0.889 & $3.71\times 10^{-3}$ \\
32 & MPS & LinearRegression & 0.999 & $4.13\times 10^{-5}$  \\
40 & QPU & ExtraTreesRegressor & $-0.430$ & $5.53\times 10^{-2}$  \\
40 & MPS & BayesianRidge & 0.994 & $2.32\times 10^{-4}$ \\
\end{tabular}
\end{ruledtabular}
\end{table*}

\begin{figure*}[!t]
\centering
\subfloat[12 Qubits\label{subfig:SimvsQPU_12qubits}]{
    \includegraphics[width=0.3\textwidth]{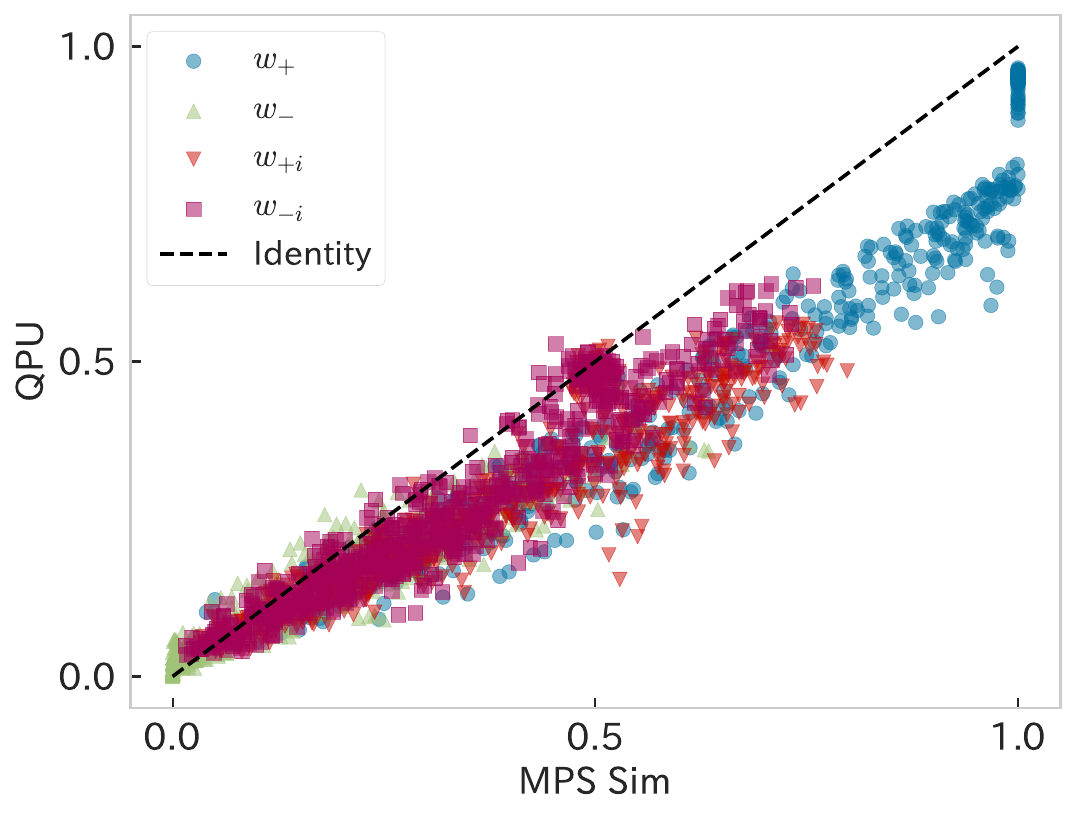}
}
\subfloat[32 Qubits\label{subfig:SimvsQPU_32qubits}]{
    \includegraphics[width=0.3\textwidth]{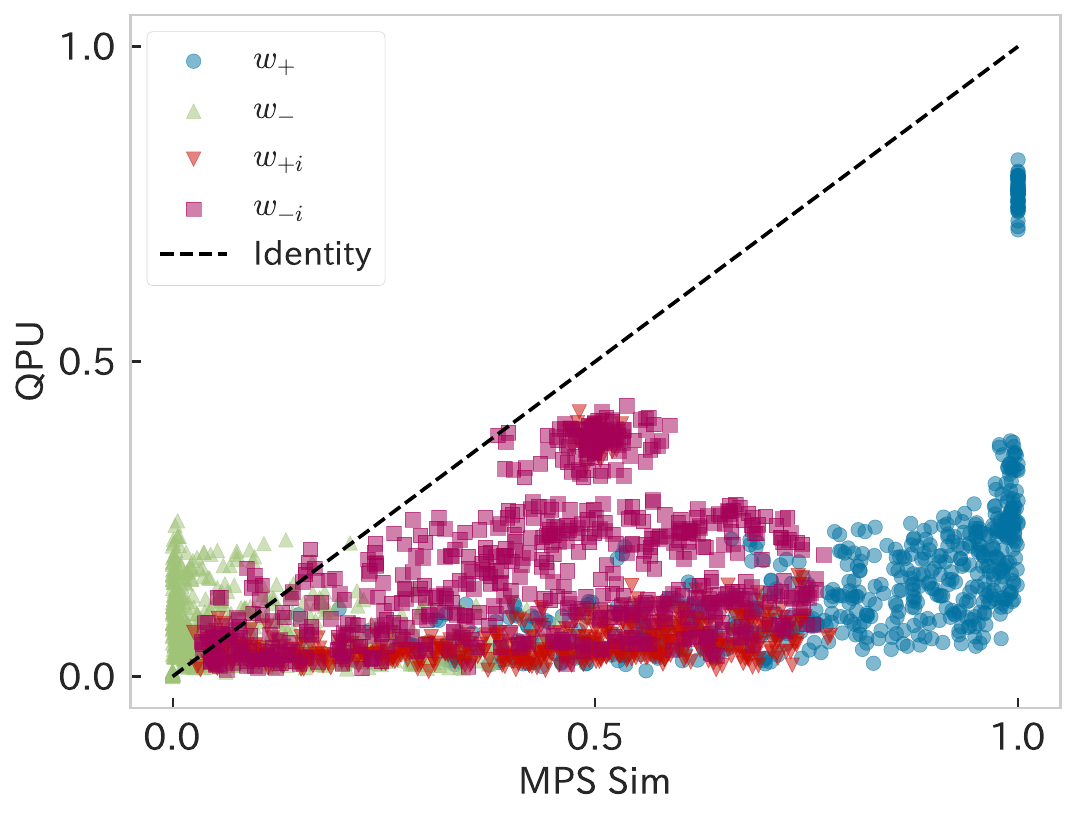}
}
\subfloat[40 Qubits\label{subfig:SimvsQPU_40qubits}]{
    \includegraphics[width=0.3\textwidth]{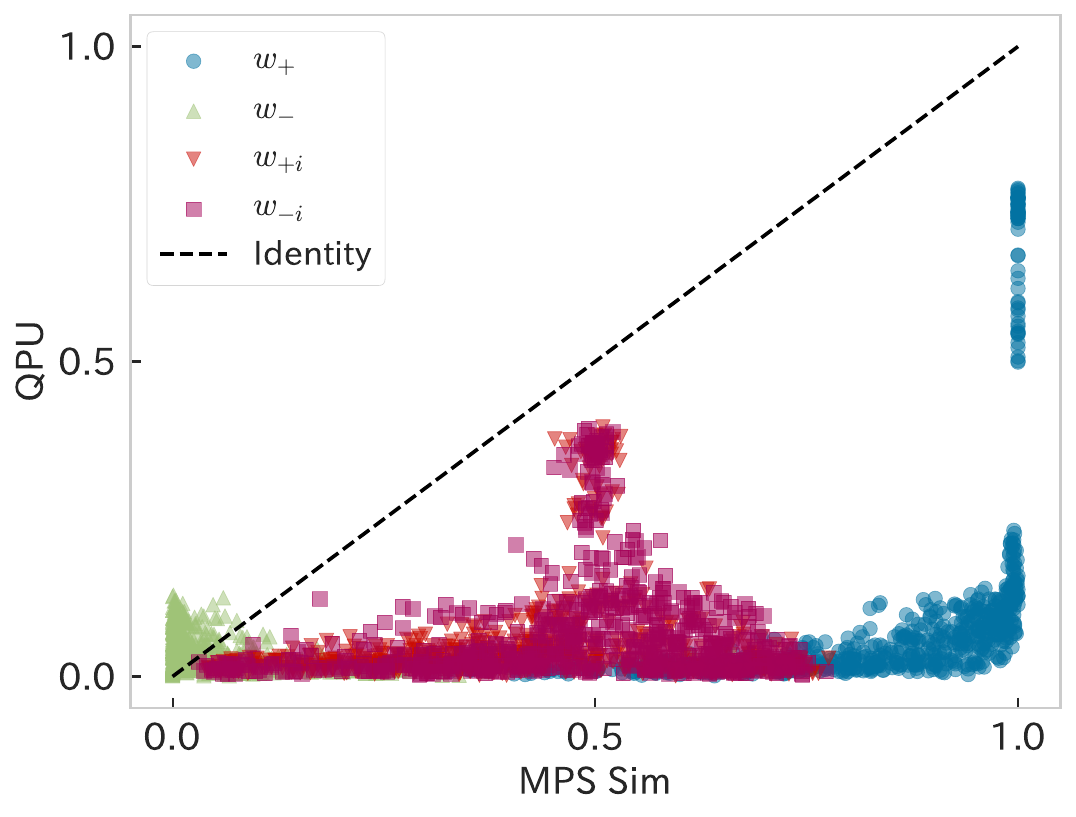}
}
\caption{
Comparison between the QPU and the noiseless MPS simulator, which effectively provides exact simulation results.
The horizontal axis represents values of $w_{\pm}$ and $w_{\pm i}$ in the noiseless MPS simulator, while the vertical axis shows the corresponding values measured on the QPU.}
\label{fig:SimvsQPU}
\end{figure*}

\section{Conclusion}
In this study, we introduced a machine learning task that is provably easy for quantum computers but arguably hard for classical computers. 
By formulating a supervised learning task where the prediction target is $y = \mathrm{Tr}[f(H)\rho]$ for an unknown function $f$ admitting a Fourier series expansion, we provided a separation between quantum and classical computational capabilities. 
The proposed quantum approach to solve the ML task leverages Hamiltonian Fourier features $\bm{x}(H)$ that can be obtained with quantum computers and integrates these features into a linear regression model. 
Theoretical analysis confirms that the task is efficiently solvable on quantum computers while being hard for classical computers under certain conditions.
Our experimental implementation using superconducting qubits further substantiated the feasibility of the ML model. 
The technique that avoids the Hadamard tests enabled the demonstration of the model construction on IBM’s quantum processors, even under noisy conditions. 

We finally discuss possible future directions.
First, an interesting direction is to explore the broader applicability of the Hamiltonian Fourier feature model introduced in this study.
While we have theoretically established its effectiveness for the specific learning task considered here, it remains an open and intriguing question whether this approach can solve other ML tasks that take Hamiltonians as explanatory variables.
Second, a challenge that remains is to formalize the classical hardness of the proposed task, particularly under average-case complexity assumptions. \\

\noindent\textbf{Data availability:} The code used in this work is available online at \url{https://github.com/YutoMorohoshi/fourier_learning_ibm}. \\

\section*{Acknowledgments}
K.M. thanks Hayata Yamasaki for a fruitful discussion at the very early stage of this project.
We acknowledge the use of IBM Quantum services for this work. 
Y.M. and H.M. thank Ryo Watanabe for the useful discussion about tensor networks.
Y.M. is supported by JST SPRING under Grant Number JPMJSP2138.
H.M. is supported by JST COI-NEXT program Grant Numbers JPMJPF2014.
This work is supported by MEXT Quantum Leap Flagship Program (MEXT-QLEAP) Grant Nos. JPMXS0120319794 and JPMXS0118067394, JST COI-NEXT Grant No. JPMJPF2014, and NEDO Grant No. JPNP20017.
K.M. is supported by JST FOREST Grant No. JPMJFR232Z, JSPS KAKENHI Grant No. 23H03819, 24K16980 and JST CREST Grant No. JPMJCR24I4.

\bibliographystyle{apsrev4-2}
\bibliography{main.bib}

%apsrev4-2.bst 2019-01-14 (MD) hand-edited version of apsrev4-1.bst
%Control: key (0)
%Control: author (72) initials jnrlst
%Control: editor formatted (1) identically to author
%Control: production of article title (-1) disabled
%Control: page (0) single
%Control: year (1) truncated
%Control: production of eprint (0) enabled
\begin{thebibliography}{24}%
\makeatletter
\providecommand \@ifxundefined [1]{%
 \@ifx{#1\undefined}
}%
\providecommand \@ifnum [1]{%
 \ifnum #1\expandafter \@firstoftwo
 \else \expandafter \@secondoftwo
 \fi
}%
\providecommand \@ifx [1]{%
 \ifx #1\expandafter \@firstoftwo
 \else \expandafter \@secondoftwo
 \fi
}%
\providecommand \natexlab [1]{#1}%
\providecommand \enquote  [1]{``#1''}%
\providecommand \bibnamefont  [1]{#1}%
\providecommand \bibfnamefont [1]{#1}%
\providecommand \citenamefont [1]{#1}%
\providecommand \href@noop [0]{\@secondoftwo}%
\providecommand \href [0]{\begingroup \@sanitize@url \@href}%
\providecommand \@href[1]{\@@startlink{#1}\@@href}%
\providecommand \@@href[1]{\endgroup#1\@@endlink}%
\providecommand \@sanitize@url [0]{\catcode `\\12\catcode `\$12\catcode `\&12\catcode `\#12\catcode `\^12\catcode `\_12\catcode `\%12\relax}%
\providecommand \@@startlink[1]{}%
\providecommand \@@endlink[0]{}%
\providecommand \url  [0]{\begingroup\@sanitize@url \@url }%
\providecommand \@url [1]{\endgroup\@href {#1}{\urlprefix }}%
\providecommand \urlprefix  [0]{URL }%
\providecommand \Eprint [0]{\href }%
\providecommand \doibase [0]{https://doi.org/}%
\providecommand \selectlanguage [0]{\@gobble}%
\providecommand \bibinfo  [0]{\@secondoftwo}%
\providecommand \bibfield  [0]{\@secondoftwo}%
\providecommand \translation [1]{[#1]}%
\providecommand \BibitemOpen [0]{}%
\providecommand \bibitemStop [0]{}%
\providecommand \bibitemNoStop [0]{.\EOS\space}%
\providecommand \EOS [0]{\spacefactor3000\relax}%
\providecommand \BibitemShut  [1]{\csname bibitem#1\endcsname}%
\let\auto@bib@innerbib\@empty
%</preamble>
\bibitem [{\citenamefont {Cerezo}\ \emph {et~al.}(2022)\citenamefont {Cerezo}, \citenamefont {Verdon}, \citenamefont {Huang}, \citenamefont {Cincio},\ and\ \citenamefont {Coles}}]{Cerezo2022-xy}%
  \BibitemOpen
  \bibfield  {author} {\bibinfo {author} {\bibfnamefont {M.}~\bibnamefont {Cerezo}}, \bibinfo {author} {\bibfnamefont {G.}~\bibnamefont {Verdon}}, \bibinfo {author} {\bibfnamefont {H.-Y.}\ \bibnamefont {Huang}}, \bibinfo {author} {\bibfnamefont {L.}~\bibnamefont {Cincio}},\ and\ \bibinfo {author} {\bibfnamefont {P.~J.}\ \bibnamefont {Coles}},\ }\href {https://doi.org/10.1038/s43588-022-00311-3} {\bibfield  {journal} {\bibinfo  {journal} {Nature computational science}\ }\textbf {\bibinfo {volume} {2}},\ \bibinfo {pages} {567} (\bibinfo {year} {2022})}\BibitemShut {NoStop}%
\bibitem [{\citenamefont {Biamonte}\ \emph {et~al.}(2017)\citenamefont {Biamonte}, \citenamefont {Wittek}, \citenamefont {Pancotti}, \citenamefont {Rebentrost}, \citenamefont {Wiebe},\ and\ \citenamefont {Lloyd}}]{Biamonte2017-ht}%
  \BibitemOpen
  \bibfield  {author} {\bibinfo {author} {\bibfnamefont {J.}~\bibnamefont {Biamonte}}, \bibinfo {author} {\bibfnamefont {P.}~\bibnamefont {Wittek}}, \bibinfo {author} {\bibfnamefont {N.}~\bibnamefont {Pancotti}}, \bibinfo {author} {\bibfnamefont {P.}~\bibnamefont {Rebentrost}}, \bibinfo {author} {\bibfnamefont {N.}~\bibnamefont {Wiebe}},\ and\ \bibinfo {author} {\bibfnamefont {S.}~\bibnamefont {Lloyd}},\ }\href {https://doi.org/10.1038/nature23474} {\bibfield  {journal} {\bibinfo  {journal} {Nature}\ }\textbf {\bibinfo {volume} {549}},\ \bibinfo {pages} {195} (\bibinfo {year} {2017})}\BibitemShut {NoStop}%
\bibitem [{\citenamefont {Cerezo}\ \emph {et~al.}(2021)\citenamefont {Cerezo}, \citenamefont {Arrasmith}, \citenamefont {Babbush}, \citenamefont {Benjamin}, \citenamefont {Endo}, \citenamefont {Fujii}, \citenamefont {McClean}, \citenamefont {Mitarai}, \citenamefont {Yuan}, \citenamefont {Cincio},\ and\ \citenamefont {Coles}}]{Cerezo2021-gk}%
  \BibitemOpen
  \bibfield  {author} {\bibinfo {author} {\bibfnamefont {M.}~\bibnamefont {Cerezo}}, \bibinfo {author} {\bibfnamefont {A.}~\bibnamefont {Arrasmith}}, \bibinfo {author} {\bibfnamefont {R.}~\bibnamefont {Babbush}}, \bibinfo {author} {\bibfnamefont {S.~C.}\ \bibnamefont {Benjamin}}, \bibinfo {author} {\bibfnamefont {S.}~\bibnamefont {Endo}}, \bibinfo {author} {\bibfnamefont {K.}~\bibnamefont {Fujii}}, \bibinfo {author} {\bibfnamefont {J.~R.}\ \bibnamefont {McClean}}, \bibinfo {author} {\bibfnamefont {K.}~\bibnamefont {Mitarai}}, \bibinfo {author} {\bibfnamefont {X.}~\bibnamefont {Yuan}}, \bibinfo {author} {\bibfnamefont {L.}~\bibnamefont {Cincio}},\ and\ \bibinfo {author} {\bibfnamefont {P.~J.}\ \bibnamefont {Coles}},\ }\href {https://doi.org/10.1038/s42254-021-00348-9} {\bibfield  {journal} {\bibinfo  {journal} {Nature Reviews Physics}\ }\textbf {\bibinfo {volume} {3}},\ \bibinfo {pages} {625} (\bibinfo {year} {2021})}\BibitemShut {NoStop}%
\bibitem [{\citenamefont {Servedio}\ and\ \citenamefont {Gortler}(2004)}]{Servedio2004Equivalences}%
  \BibitemOpen
  \bibfield  {author} {\bibinfo {author} {\bibfnamefont {R.~A.}\ \bibnamefont {Servedio}}\ and\ \bibinfo {author} {\bibfnamefont {S.~J.}\ \bibnamefont {Gortler}},\ }\href {https://doi.org/10.1137/S0097539704412910} {\bibfield  {journal} {\bibinfo  {journal} {SIAM Journal on Computing}\ }\textbf {\bibinfo {volume} {33}},\ \bibinfo {pages} {1067} (\bibinfo {year} {2004})}\BibitemShut {NoStop}%
\bibitem [{\citenamefont {Kearns}\ and\ \citenamefont {Valiant}(1994)}]{Kearns1994}%
  \BibitemOpen
  \bibfield  {author} {\bibinfo {author} {\bibfnamefont {M.}~\bibnamefont {Kearns}}\ and\ \bibinfo {author} {\bibfnamefont {L.}~\bibnamefont {Valiant}},\ }\href {https://doi.org/10.1145/174644.174647} {\bibfield  {journal} {\bibinfo  {journal} {J. ACM}\ }\textbf {\bibinfo {volume} {41}},\ \bibinfo {pages} {67–95} (\bibinfo {year} {1994})}\BibitemShut {NoStop}%
\bibitem [{\citenamefont {Liu}\ \emph {et~al.}(2021)\citenamefont {Liu}, \citenamefont {Arunachalam},\ and\ \citenamefont {Temme}}]{Liu2021-pe}%
  \BibitemOpen
  \bibfield  {author} {\bibinfo {author} {\bibfnamefont {Y.}~\bibnamefont {Liu}}, \bibinfo {author} {\bibfnamefont {S.}~\bibnamefont {Arunachalam}},\ and\ \bibinfo {author} {\bibfnamefont {K.}~\bibnamefont {Temme}},\ }\href {https://doi.org/10.1038/s41567-021-01287-z} {\bibfield  {journal} {\bibinfo  {journal} {Nature physics}\ }\textbf {\bibinfo {volume} {17}},\ \bibinfo {pages} {1013} (\bibinfo {year} {2021})}\BibitemShut {NoStop}%
\bibitem [{\citenamefont {Schuld}\ and\ \citenamefont {Killoran}(2019)}]{Schuld2019Quantum}%
  \BibitemOpen
  \bibfield  {author} {\bibinfo {author} {\bibfnamefont {M.}~\bibnamefont {Schuld}}\ and\ \bibinfo {author} {\bibfnamefont {N.}~\bibnamefont {Killoran}},\ }\href {https://doi.org/10.1103/PhysRevLett.122.040504} {\bibfield  {journal} {\bibinfo  {journal} {Physical Review Letters}\ }\textbf {\bibinfo {volume} {122}},\ \bibinfo {pages} {040504} (\bibinfo {year} {2019})}\BibitemShut {NoStop}%
\bibitem [{\citenamefont {Havl{\'{i}}{\v{c}}ek}\ \emph {et~al.}(2019)\citenamefont {Havl{\'{i}}{\v{c}}ek}, \citenamefont {C{\'{o}}rcoles}, \citenamefont {Temme}, \citenamefont {Harrow}, \citenamefont {Kandala}, \citenamefont {Chow},\ and\ \citenamefont {Gambetta}}]{Havlicek2019Supervised}%
  \BibitemOpen
  \bibfield  {author} {\bibinfo {author} {\bibfnamefont {V.}~\bibnamefont {Havl{\'{i}}{\v{c}}ek}}, \bibinfo {author} {\bibfnamefont {A.~D.}\ \bibnamefont {C{\'{o}}rcoles}}, \bibinfo {author} {\bibfnamefont {K.}~\bibnamefont {Temme}}, \bibinfo {author} {\bibfnamefont {A.~W.}\ \bibnamefont {Harrow}}, \bibinfo {author} {\bibfnamefont {A.}~\bibnamefont {Kandala}}, \bibinfo {author} {\bibfnamefont {J.~M.}\ \bibnamefont {Chow}},\ and\ \bibinfo {author} {\bibfnamefont {J.~M.}\ \bibnamefont {Gambetta}},\ }\href {https://doi.org/10.1038/s41586-019-0980-2} {\bibfield  {journal} {\bibinfo  {journal} {Nature}\ }\textbf {\bibinfo {volume} {567}},\ \bibinfo {pages} {209} (\bibinfo {year} {2019})}\BibitemShut {NoStop}%
\bibitem [{\citenamefont {Yamasaki}\ \emph {et~al.}(2023)\citenamefont {Yamasaki}, \citenamefont {Isogai},\ and\ \citenamefont {Murao}}]{YamasakiIsogaiMurao2023}%
  \BibitemOpen
  \bibfield  {author} {\bibinfo {author} {\bibfnamefont {H.}~\bibnamefont {Yamasaki}}, \bibinfo {author} {\bibfnamefont {N.}~\bibnamefont {Isogai}},\ and\ \bibinfo {author} {\bibfnamefont {M.}~\bibnamefont {Murao}},\ }\Eprint {https://arxiv.org/abs/2312.03057} {arXiv:2312.03057 [quant-ph]}  (\bibinfo {year} {2023})\BibitemShut {NoStop}%
\bibitem [{\citenamefont {Gyurik}\ and\ \citenamefont {Dunjko}(2023)}]{Gyurik2023Exponential}%
  \BibitemOpen
  \bibfield  {author} {\bibinfo {author} {\bibfnamefont {C.}~\bibnamefont {Gyurik}}\ and\ \bibinfo {author} {\bibfnamefont {V.}~\bibnamefont {Dunjko}},\ }\Eprint {https://arxiv.org/abs/2306.16028} {arXiv:2306.16028 [quant-ph]}  (\bibinfo {year} {2023})\BibitemShut {NoStop}%
\bibitem [{\citenamefont {Molteni}\ \emph {et~al.}(2024)\citenamefont {Molteni}, \citenamefont {Gyurik},\ and\ \citenamefont {Dunjko}}]{MolteniGyurikDunjko2024}%
  \BibitemOpen
  \bibfield  {author} {\bibinfo {author} {\bibfnamefont {R.}~\bibnamefont {Molteni}}, \bibinfo {author} {\bibfnamefont {C.}~\bibnamefont {Gyurik}},\ and\ \bibinfo {author} {\bibfnamefont {V.}~\bibnamefont {Dunjko}},\ }\Eprint {https://arxiv.org/abs/2405.02027} {arXiv:2405.02027 [quant-ph]}  (\bibinfo {year} {2024})\BibitemShut {NoStop}%
\bibitem [{\citenamefont {Salem}\ and\ \citenamefont {Zygmund}(1946)}]{SalemZygmund1946}%
  \BibitemOpen
  \bibfield  {author} {\bibinfo {author} {\bibfnamefont {R.}~\bibnamefont {Salem}}\ and\ \bibinfo {author} {\bibfnamefont {A.}~\bibnamefont {Zygmund}},\ }\href {https://doi.org/https://doi.org/10.1090/S0002-9947-1946-0015538-0} {\bibfield  {journal} {\bibinfo  {journal} {Trans. Amer. Math. Soc.}\ }\textbf {\bibinfo {volume} {59}},\ \bibinfo {pages} {14} (\bibinfo {year} {1946})}\BibitemShut {NoStop}%
\bibitem [{\citenamefont {Feynman}(1986)}]{feynman1986quantum}%
  \BibitemOpen
  \bibfield  {author} {\bibinfo {author} {\bibfnamefont {R.~P.}\ \bibnamefont {Feynman}},\ }\href@noop {} {\bibfield  {journal} {\bibinfo  {journal} {Found. Phys.}\ }\textbf {\bibinfo {volume} {16}},\ \bibinfo {pages} {507} (\bibinfo {year} {1986})}\BibitemShut {NoStop}%
\bibitem [{\citenamefont {Cifuentes}\ \emph {et~al.}(2024)\citenamefont {Cifuentes}, \citenamefont {Wang}, \citenamefont {Silva}, \citenamefont {Berta},\ and\ \citenamefont {Aolita}}]{cifuentes2024quantum}%
  \BibitemOpen
  \bibfield  {author} {\bibinfo {author} {\bibfnamefont {S.}~\bibnamefont {Cifuentes}}, \bibinfo {author} {\bibfnamefont {S.}~\bibnamefont {Wang}}, \bibinfo {author} {\bibfnamefont {T.~L.}\ \bibnamefont {Silva}}, \bibinfo {author} {\bibfnamefont {M.}~\bibnamefont {Berta}},\ and\ \bibinfo {author} {\bibfnamefont {L.}~\bibnamefont {Aolita}},\ }\href@noop {} {\bibfield  {journal} {\bibinfo  {journal} {arXiv preprint arXiv:2410.13937}\ } (\bibinfo {year} {2024})}\BibitemShut {NoStop}%
\bibitem [{\citenamefont {Kyriienko}(2020)}]{kyriienko2020quantum}%
  \BibitemOpen
  \bibfield  {author} {\bibinfo {author} {\bibfnamefont {O.}~\bibnamefont {Kyriienko}},\ }\href {https://doi.org/10.1038/s41534-019-0239-9} {\bibfield  {journal} {\bibinfo  {journal} {npj Quantum Information}\ }\textbf {\bibinfo {volume} {6}},\ \bibinfo {pages} {7} (\bibinfo {year} {2020})}\BibitemShut {NoStop}%
\bibitem [{\citenamefont {Vidal}(2004)}]{vidal2004efficient}%
  \BibitemOpen
  \bibfield  {author} {\bibinfo {author} {\bibfnamefont {G.}~\bibnamefont {Vidal}},\ }\href@noop {} {\bibfield  {journal} {\bibinfo  {journal} {Physical review letters}\ }\textbf {\bibinfo {volume} {93}},\ \bibinfo {pages} {040502} (\bibinfo {year} {2004})}\BibitemShut {NoStop}%
\bibitem [{\citenamefont {Hauschild}\ \emph {et~al.}(2024)\citenamefont {Hauschild}, \citenamefont {Unfried}, \citenamefont {Anand}, \citenamefont {Andrews}, \citenamefont {Bintz}, \citenamefont {Borla}, \citenamefont {Divic}, \citenamefont {Drescher}, \citenamefont {Geiger}, \citenamefont {Hefel}, \citenamefont {Hémery}, \citenamefont {Kadow}, \citenamefont {Kemp}, \citenamefont {Kirchner}, \citenamefont {Liu}, \citenamefont {Möller}, \citenamefont {Parker}, \citenamefont {Rader}, \citenamefont {Romen}, \citenamefont {Scalet}, \citenamefont {Schoonderwoerd}, \citenamefont {Schulz}, \citenamefont {Soejima}, \citenamefont {Thoma}, \citenamefont {Wu}, \citenamefont {Zechmann}, \citenamefont {Zweng}, \citenamefont {Mong}, \citenamefont {Zaletel},\ and\ \citenamefont {Pollmann}}]{tenpy2024}%
  \BibitemOpen
  \bibfield  {author} {\bibinfo {author} {\bibfnamefont {J.}~\bibnamefont {Hauschild}}, \bibinfo {author} {\bibfnamefont {J.}~\bibnamefont {Unfried}}, \bibinfo {author} {\bibfnamefont {S.}~\bibnamefont {Anand}}, \bibinfo {author} {\bibfnamefont {B.}~\bibnamefont {Andrews}}, \bibinfo {author} {\bibfnamefont {M.}~\bibnamefont {Bintz}}, \bibinfo {author} {\bibfnamefont {U.}~\bibnamefont {Borla}}, \bibinfo {author} {\bibfnamefont {S.}~\bibnamefont {Divic}}, \bibinfo {author} {\bibfnamefont {M.}~\bibnamefont {Drescher}}, \bibinfo {author} {\bibfnamefont {J.}~\bibnamefont {Geiger}}, \bibinfo {author} {\bibfnamefont {M.}~\bibnamefont {Hefel}}, \bibinfo {author} {\bibfnamefont {K.}~\bibnamefont {Hémery}}, \bibinfo {author} {\bibfnamefont {W.}~\bibnamefont {Kadow}}, \bibinfo {author} {\bibfnamefont {J.}~\bibnamefont {Kemp}}, \bibinfo {author} {\bibfnamefont {N.}~\bibnamefont {Kirchner}}, \bibinfo {author} {\bibfnamefont {V.~S.}\ \bibnamefont {Liu}}, \bibinfo {author} {\bibfnamefont {G.}~\bibnamefont {Möller}},
  \bibinfo {author} {\bibfnamefont {D.}~\bibnamefont {Parker}}, \bibinfo {author} {\bibfnamefont {M.}~\bibnamefont {Rader}}, \bibinfo {author} {\bibfnamefont {A.}~\bibnamefont {Romen}}, \bibinfo {author} {\bibfnamefont {S.}~\bibnamefont {Scalet}}, \bibinfo {author} {\bibfnamefont {L.}~\bibnamefont {Schoonderwoerd}}, \bibinfo {author} {\bibfnamefont {M.}~\bibnamefont {Schulz}}, \bibinfo {author} {\bibfnamefont {T.}~\bibnamefont {Soejima}}, \bibinfo {author} {\bibfnamefont {P.}~\bibnamefont {Thoma}}, \bibinfo {author} {\bibfnamefont {Y.}~\bibnamefont {Wu}}, \bibinfo {author} {\bibfnamefont {P.}~\bibnamefont {Zechmann}}, \bibinfo {author} {\bibfnamefont {L.}~\bibnamefont {Zweng}}, \bibinfo {author} {\bibfnamefont {R.~S.~K.}\ \bibnamefont {Mong}}, \bibinfo {author} {\bibfnamefont {M.~P.}\ \bibnamefont {Zaletel}},\ and\ \bibinfo {author} {\bibfnamefont {F.}~\bibnamefont {Pollmann}},\ }\href {https://doi.org/10.21468/SciPostPhysCodeb.41} {\bibfield  {journal} {\bibinfo  {journal} {SciPost Phys. Codebases}\ ,\
  \bibinfo {pages} {41}} (\bibinfo {year} {2024})}\BibitemShut {NoStop}%
\bibitem [{ibm(2025)}]{ibmquantum}%
  \BibitemOpen
  \href@noop {} {\bibinfo {title} {{IBM Q}uantum}},\ \bibinfo {howpublished} {https://quantum.ibm.com/} (\bibinfo {year} {2025})\BibitemShut {NoStop}%
\bibitem [{\citenamefont {Javadi-Abhari}\ \emph {et~al.}(2024)\citenamefont {Javadi-Abhari}, \citenamefont {Treinish}, \citenamefont {Krsulich}, \citenamefont {Wood}, \citenamefont {Lishman}, \citenamefont {Gacon}, \citenamefont {Martiel}, \citenamefont {Nation}, \citenamefont {Bishop}, \citenamefont {Cross}, \citenamefont {Johnson},\ and\ \citenamefont {Gambetta}}]{qiskit2024}%
  \BibitemOpen
  \bibfield  {author} {\bibinfo {author} {\bibfnamefont {A.}~\bibnamefont {Javadi-Abhari}}, \bibinfo {author} {\bibfnamefont {M.}~\bibnamefont {Treinish}}, \bibinfo {author} {\bibfnamefont {K.}~\bibnamefont {Krsulich}}, \bibinfo {author} {\bibfnamefont {C.~J.}\ \bibnamefont {Wood}}, \bibinfo {author} {\bibfnamefont {J.}~\bibnamefont {Lishman}}, \bibinfo {author} {\bibfnamefont {J.}~\bibnamefont {Gacon}}, \bibinfo {author} {\bibfnamefont {S.}~\bibnamefont {Martiel}}, \bibinfo {author} {\bibfnamefont {P.~D.}\ \bibnamefont {Nation}}, \bibinfo {author} {\bibfnamefont {L.~S.}\ \bibnamefont {Bishop}}, \bibinfo {author} {\bibfnamefont {A.~W.}\ \bibnamefont {Cross}}, \bibinfo {author} {\bibfnamefont {B.~R.}\ \bibnamefont {Johnson}},\ and\ \bibinfo {author} {\bibfnamefont {J.~M.}\ \bibnamefont {Gambetta}},\ }\href {https://doi.org/10.48550/arXiv.2405.08810} {\bibinfo {title} {Quantum computing with {Q}iskit}} (\bibinfo {year} {2024}),\ \Eprint {https://arxiv.org/abs/2405.08810} {arXiv:2405.08810 [quant-ph]}
  \BibitemShut {NoStop}%
\bibitem [{\citenamefont {Vidal}(2003)}]{vidal2003efficient}%
  \BibitemOpen
  \bibfield  {author} {\bibinfo {author} {\bibfnamefont {G.}~\bibnamefont {Vidal}},\ }\href@noop {} {\bibfield  {journal} {\bibinfo  {journal} {Physical review letters}\ }\textbf {\bibinfo {volume} {91}},\ \bibinfo {pages} {147902} (\bibinfo {year} {2003})}\BibitemShut {NoStop}%
\bibitem [{\citenamefont {Ali}(2020)}]{PyCaret}%
  \BibitemOpen
  \bibfield  {author} {\bibinfo {author} {\bibfnamefont {M.}~\bibnamefont {Ali}},\ }\href {https://www.pycaret.org} {\emph {\bibinfo {title} {PyCaret: An open source, low-code machine learning library in Python}}} (\bibinfo {year} {2020}),\ \bibinfo {note} {pyCaret version 1.0}\BibitemShut {NoStop}%
\bibitem [{\citenamefont {Cai}\ \emph {et~al.}(2023)\citenamefont {Cai}, \citenamefont {Babbush}, \citenamefont {Benjamin}, \citenamefont {Endo}, \citenamefont {Huggins}, \citenamefont {Li}, \citenamefont {McClean},\ and\ \citenamefont {O’Brien}}]{cai2023quantum}%
  \BibitemOpen
  \bibfield  {author} {\bibinfo {author} {\bibfnamefont {Z.}~\bibnamefont {Cai}}, \bibinfo {author} {\bibfnamefont {R.}~\bibnamefont {Babbush}}, \bibinfo {author} {\bibfnamefont {S.~C.}\ \bibnamefont {Benjamin}}, \bibinfo {author} {\bibfnamefont {S.}~\bibnamefont {Endo}}, \bibinfo {author} {\bibfnamefont {W.~J.}\ \bibnamefont {Huggins}}, \bibinfo {author} {\bibfnamefont {Y.}~\bibnamefont {Li}}, \bibinfo {author} {\bibfnamefont {J.~R.}\ \bibnamefont {McClean}},\ and\ \bibinfo {author} {\bibfnamefont {T.~E.}\ \bibnamefont {O’Brien}},\ }\href@noop {} {\bibfield  {journal} {\bibinfo  {journal} {Reviews of Modern Physics}\ }\textbf {\bibinfo {volume} {95}},\ \bibinfo {pages} {045005} (\bibinfo {year} {2023})}\BibitemShut {NoStop}%
\bibitem [{\citenamefont {Endo}\ \emph {et~al.}(2021)\citenamefont {Endo}, \citenamefont {Cai}, \citenamefont {Benjamin},\ and\ \citenamefont {Yuan}}]{endo2021hybrid}%
  \BibitemOpen
  \bibfield  {author} {\bibinfo {author} {\bibfnamefont {S.}~\bibnamefont {Endo}}, \bibinfo {author} {\bibfnamefont {Z.}~\bibnamefont {Cai}}, \bibinfo {author} {\bibfnamefont {S.~C.}\ \bibnamefont {Benjamin}},\ and\ \bibinfo {author} {\bibfnamefont {X.}~\bibnamefont {Yuan}},\ }\href@noop {} {\bibfield  {journal} {\bibinfo  {journal} {Journal of the Physical Society of Japan}\ }\textbf {\bibinfo {volume} {90}},\ \bibinfo {pages} {032001} (\bibinfo {year} {2021})}\BibitemShut {NoStop}%
\bibitem [{\citenamefont {Mohri}\ \emph {et~al.}(2018)\citenamefont {Mohri}, \citenamefont {Rostamizadeh},\ and\ \citenamefont {Talwalkar}}]{mohri2018foundations}%
  \BibitemOpen
  \bibfield  {author} {\bibinfo {author} {\bibfnamefont {M.}~\bibnamefont {Mohri}}, \bibinfo {author} {\bibfnamefont {A.}~\bibnamefont {Rostamizadeh}},\ and\ \bibinfo {author} {\bibfnamefont {A.}~\bibnamefont {Talwalkar}},\ }\href {https://mitpress.mit.edu/9780262039406} {\emph {\bibinfo {title} {Foundations of Machine Learning}}},\ \bibinfo {edition} {2nd}\ ed.\ (\bibinfo  {publisher} {The MIT Press},\ \bibinfo {address} {Cambridge, MA},\ \bibinfo {year} {2018})\ \bibinfo {note} {a new edition of a graduate-level machine learning textbook that focuses on the analysis and theory of algorithms.}\BibitemShut {Stop}%
\end{thebibliography}%

\newpage

\onecolumngrid

\appendix
\section{Proof of quantum easiness}\label{sec:proof-quantum-easiness}

To prove Theorem \ref{thm:quantum-easiness}, we first bound the generalization error $R(\bm{w})-\hat{R}(\bm{w})$ through the Rademacher complexity.

\begin{definition}[\cite{mohri2018foundations}, p. 30]
    Let $\mathcal{G}$ be a family of functions mapping from $\mathcal{X}$ to $[a, b]$, and let $S=\left(x_1, \ldots, x_m\right)$ be a fixed sample of size $m$ with elements in $\mathcal{X}$. Then, the empirical Rademacher complexity of $\mathcal{G}$ with respect to the sample $S$ is defined as:
    \begin{align}
        \widehat{\mathfrak{R}}_S(\mathcal{G})=\underset{\boldsymbol{\sigma}}{\mathbb{E}}\left[\sup _{g \in \mathcal{G}} \frac{1}{m} \sum_{i=1}^m \sigma_i g\left(x_i\right)\right]
    \end{align}
    with $\sigma_i$'s being independent uniform random variables taking values in $\{-1,+1\}$.
\end{definition}

The Rademacher complexity can be used to bound the generalization error of a model by the following lemma.

\begin{lemma}[\cite{mohri2018foundations}, p. 270]\label{lemma:generalization-bound}
        Let $ \mathcal{H} $ be a family of functions that maps the input space $\mathcal{X}$ to $\mathcal{Y}\subset\mathbb{R}$, and assume that for all $ (x, y) \in \mathcal{X} \times \mathcal{Y} $ and $ h \in \mathcal{H} $, $ |h(x) - y| \leq M $ holds for some $ M > 0 $. Let $ p \geq 1 $. Then, with probability at least $ 1 - \delta $ over a sample $ S \in \mathcal{X} \times \mathcal{Y}$ of size $ m $ drawn from the distribution $\mathcal{D}$ on $\mathcal{X} \times \mathcal{Y}$, the following inequality holds for all $ h \in \mathcal{H} $:
        \[
        \mathbb{E}_{(x, y) \sim \mathcal{D}} \left[ |h(x) - y|^p \right] \leq \frac{1}{m} \sum_{i=1}^m |h(x_i) - y_i|^p + 2p M^{p-1} \widehat{\mathfrak{R}}_S(\mathcal{H}) + 3M^p \sqrt{\frac{\log \frac{2}{\delta}}{2m}},
        \]
        where $\widehat{\mathfrak{R}}_m(\mathcal{H})$ denotes the empirical Rademacher complexity of the hypothesis class $\mathcal{H}$.
\end{lemma}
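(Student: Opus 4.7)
The plan is to prove the lemma by the textbook Rademacher-complexity pipeline: bound the uniform generalization gap via McDiarmid concentration, reduce its expectation to a Rademacher average of the loss class by symmetrization, and then push the complexity back to $\mathcal{H}$ using Talagrand's contraction lemma. Define the loss class $\mathcal{L}_{\mathcal{H}} = \{(x,y)\mapsto |h(x)-y|^p : h\in\mathcal{H}\}$; by the boundedness hypothesis $|h(x)-y|\le M$, so $\mathcal{L}_{\mathcal{H}}$ takes values in $[0,M^p]$. Let $\Phi(S) = \sup_{h\in\mathcal{H}} \bigl(\mathbb{E}_{\mathcal{D}}[|h(x)-y|^p] - \tfrac{1}{m}\sum_{i=1}^m|h(x_i)-y_i|^p\bigr)$ denote the worst-case gap; it suffices to bound $\Phi(S)$ from above with high probability.

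The first step is McDiarmid's inequality applied to $\Phi$: replacing any single $(x_i,y_i)$ perturbs $\Phi$ by at most $M^p/m$, so with probability at least $1-\delta/2$ one obtains $\Phi(S) \le \mathbb{E}[\Phi(S)] + M^p\sqrt{\log(2/\delta)/(2m)}$. The second step is the standard ghost-sample symmetrization argument, which yields $\mathbb{E}[\Phi(S)] \le 2\,\mathbb{E}\bigl[\widehat{\mathfrak{R}}_S(\mathcal{L}_{\mathcal{H}})\bigr]$. The third step converts expected Rademacher complexity to the empirical one by a second McDiarmid application: $S\mapsto \widehat{\mathfrak{R}}_S(\mathcal{L}_{\mathcal{H}})$ has bounded-differences constant $M^p/m$, so with probability at least $1-\delta/2$ one gets $\mathbb{E}[\widehat{\mathfrak{R}}_S(\mathcal{L}_{\mathcal{H}})] \le \widehat{\mathfrak{R}}_S(\mathcal{L}_{\mathcal{H}}) + M^p\sqrt{\log(2/\delta)/(2m)}$. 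A union bound over the two events then gives $\Phi(S) \le 2\,\widehat{\mathfrak{R}}_S(\mathcal{L}_{\mathcal{H}}) + 3M^p\sqrt{\log(2/\delta)/(2m)}$ with probability at least $1-\delta$.

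The final step invokes Talagrand's contraction lemma with the coordinate-wise maps $\Phi_i(t) = |t-y_i|^p$. Since $\phi(t)=|t|^p$ has derivative bounded by $pM^{p-1}$ on $[-M,M]$, each $\Phi_i$ is $pM^{p-1}$-Lipschitz on the relevant range of $h(x_i)-y_i$. The contraction lemma then yields $\widehat{\mathfrak{R}}_S(\mathcal{L}_{\mathcal{H}}) \le pM^{p-1}\,\widehat{\mathfrak{R}}_S(\mathcal{H})$. Substituting into the previous display and using $\mathbb{E}[|h(x)-y|^p] - \tfrac{1}{m}\sum|h(x_i)-y_i|^p \le \Phi(S)$ uniformly in $h\in\mathcal{H}$ delivers the stated inequality with constants $2p$ and $3$.

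The main obstacle is the constant bookkeeping needed to land on exactly $3M^p$ and $2pM^{p-1}$. The factor $3$ appears only if one splits the confidence evenly ($\delta/2$ each) across the two McDiarmid invocations, and then aggregates the two $M^p\sqrt{\log(2/\delta)/(2m)}$ tails together with the factor $2$ from symmetrization (producing $1+2=3$). The factor $2p$ requires that the symmetrization $2$ multiply the Lipschitz constant $pM^{p-1}$ produced by contraction. A minor subtlety is that Talagrand's lemma is typically stated for $1$-Lipschitz maps, so one must either use its homogeneous form or explicitly factor the Lipschitz constant outside the supremum before applying the standard version; otherwise everything follows mechanically.
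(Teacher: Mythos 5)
This lemma is imported verbatim from Mohri et al.\ (p.~270) and the paper supplies no proof of its own, so the only meaningful comparison is with the cited textbook argument, which your reconstruction reproduces faithfully: McDiarmid plus symmetrization applied to the loss class $\mathcal{L}_{\mathcal{H}}$ valued in $[0,M^p]$, followed by Talagrand contraction with Lipschitz constant $pM^{p-1}$ and the observation that subtracting the fixed labels $y_i$ leaves the Rademacher average of $\mathcal{H}$ unchanged. The constant bookkeeping (the $1+2=3$ aggregation of the two $\delta/2$ tails and the $2\times pM^{p-1}$ product) is correct, so the proposal is sound and matches the standard route the paper implicitly relies on.
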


This theorem directly implies $R(\bm{w})-\hat{R}(\bm{w}) \leq 4M\widehat{\mathfrak{R}}_S(\mathcal{H}) + 3M^2\sqrt{\frac{\log \frac{2}{\delta}}{2N_d}}$ by setting $p=2$.
It thus remains to bound the empirical Rademacher complexity $\widehat{\mathfrak{R}}_S(\mathcal{H})$ and $|h(x) - y|$ for our specific model $g(H,\rho;\bm{w})$ for making a bound on $R(\bm{w})-\hat{R}(\bm{w})$.
We use the following theorem from \cite{mohri2018foundations} to do the former.

\begin{lemma}[\cite{mohri2018foundations}, p. 118]
    Let $K: \mathcal{X} \times \mathcal{X} \rightarrow \mathbb{R}$ be a positive definite symmetric kernel and let $\mathbb{H}$ be a reproducing kernel Hilbert space associated to $K$ and $\Phi: \mathcal{X} \rightarrow \mathbb{H}$ be a feature mapping associated to $K$. Let $S$ be a subset of $\mathcal{X}$ with size $m$, and let $\mathcal{H}=\left\{x \mapsto\langle\bm{a}, \Phi(x)\rangle_{\mathbb{H}}:\|\bm{a}\|_{\mathbb{H}} \leq \Lambda\right\}$ for some $\Lambda \geq 0$, where $\braket{\cdot,\cdot}_{\mathbb{H}}$ and $\|\cdot\|_{\mathbb{H}}$ denote the inner product and the norm defined in $\mathbb{H}$. Then,
    \begin{align}
        \widehat{\mathfrak{R}}_S(\mathcal{H}) \leq \frac{\Lambda \sqrt{\operatorname{Tr}[\mathbf{K}]}}{m},
    \end{align}
    where $\mathbf{K}$ is the kernel matrix with elements $K(x_i, x_j)$ for $x_i,x_j \in S$.
\end{lemma}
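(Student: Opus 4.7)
The plan is to compute the supremum inside the definition of empirical Rademacher complexity explicitly by exploiting the linear structure of $\mathcal{H}$ in the RKHS, and then apply Jensen's inequality to convert an expected norm into a root-mean-square that collapses to $\mathrm{Tr}[\mathbf{K}]$ thanks to the independence of the Rademacher variables. Unfolding the definition gives
\begin{align}
\widehat{\mathfrak{R}}_S(\mathcal{H}) = \mathbb{E}_{\boldsymbol{\sigma}}\left[\sup_{\|\bm{a}\|_{\mathbb{H}}\leq \Lambda} \frac{1}{m}\sum_{i=1}^m \sigma_i \langle \bm{a}, \Phi(x_i)\rangle_{\mathbb{H}}\right],
\end{align}
and by linearity the argument of the supremum equals $\langle \bm{a}, \tfrac{1}{m}\sum_i \sigma_i \Phi(x_i)\rangle_{\mathbb{H}}$.

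First I would apply Cauchy--Schwarz in $\mathbb{H}$ together with the norm constraint $\|\bm{a}\|_{\mathbb{H}}\leq\Lambda$ to obtain
\begin{align}
\sup_{\|\bm{a}\|_{\mathbb{H}}\leq\Lambda}\left\langle \bm{a},\, \tfrac{1}{m}\sum_i \sigma_i\Phi(x_i)\right\rangle_{\mathbb{H}} \leq \frac{\Lambda}{m}\left\|\sum_{i=1}^m \sigma_i \Phi(x_i)\right\|_{\mathbb{H}}.
\end{align}
This inequality is in fact an equality (the maximizer is $\bm{a} = \Lambda \sum_i \sigma_i \Phi(x_i)/\|\sum_i \sigma_i\Phi(x_i)\|_{\mathbb{H}}$ when the denominator is nonzero), but we only need the upper bound.

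Next I would push the expectation over the Rademacher variables inside the square root via Jensen's inequality applied to the concave function $\sqrt{\cdot}$:
\begin{align}
\mathbb{E}_{\boldsymbol{\sigma}}\left[\left\|\sum_i \sigma_i \Phi(x_i)\right\|_{\mathbb{H}}\right] \leq \sqrt{\mathbb{E}_{\boldsymbol{\sigma}}\left[\left\|\sum_i \sigma_i \Phi(x_i)\right\|_{\mathbb{H}}^2\right]}.
\end{align}
Finally I would expand the squared norm using bilinearity, $\|\sum_i \sigma_i \Phi(x_i)\|_{\mathbb{H}}^2 = \sum_{i,j}\sigma_i\sigma_j\langle \Phi(x_i),\Phi(x_j)\rangle_{\mathbb{H}} = \sum_{i,j}\sigma_i\sigma_j K(x_i,x_j)$, and use that the $\sigma_i$ are independent with zero mean and unit variance, so $\mathbb{E}[\sigma_i\sigma_j]=\delta_{ij}$. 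The cross terms vanish and we are left with $\sum_{i=1}^m K(x_i,x_i) = \mathrm{Tr}[\mathbf{K}]$. Combining these three steps gives the advertised bound $\widehat{\mathfrak{R}}_S(\mathcal{H})\leq \Lambda\sqrt{\mathrm{Tr}[\mathbf{K}]}/m$.

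There is no serious obstacle in this argument; it is a textbook calculation. The only point requiring minimal care is justifying the exchange of supremum and expectation implicit in applying Cauchy--Schwarz pointwise in $\boldsymbol{\sigma}$ before taking the expectation, which is immediate since the Cauchy--Schwarz bound holds uniformly in $\boldsymbol{\sigma}$, and then verifying that the identity $\mathbb{E}[\sigma_i\sigma_j]=\delta_{ij}$ suffices to collapse the double sum to the kernel trace.
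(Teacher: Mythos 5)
Your proof is correct and is exactly the standard argument (Cauchy--Schwarz in the RKHS, Jensen's inequality for the square root, then $\mathbb{E}[\sigma_i\sigma_j]=\delta_{ij}$ to collapse the double sum to $\operatorname{Tr}[\mathbf{K}]$); the paper does not reproduce a proof of this lemma but simply cites it from Mohri et al., and your derivation coincides with the textbook proof being invoked.
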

Note that it immediately implies the following corollary.
\begin{corollary}\label{cor:rademacher}
    Let $\Phi: \mathcal{X} \rightarrow \mathbb{R}^n$ be a feature map. Let $S$ be a subset of $\mathcal{X}$ with size $m$, and let $\mathcal{H}=\left\{x \mapsto \bm{a}^{\T}\Phi(x): \bm{a}^{\T} \bm{a} \leq \Lambda^2\right\}$ for some $\Lambda \geq 0$. Let $\mathbf{K}$ be the kernel matrix with elements $K_{ij} = \Phi(x_i)^{\T}\Phi(x_j)$ for $x_i,x_j \in S$. Then,
    \begin{align}
        \widehat{\mathfrak{R}}_S(\mathcal{H}) \leq \frac{\Lambda \sqrt{\operatorname{Tr}[\mathbf{K}]}}{m},
    \end{align}
\end{corollary}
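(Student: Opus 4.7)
The plan is to derive Corollary~\ref{cor:rademacher} as an immediate specialization of the preceding lemma to the case where the feature space is finite-dimensional Euclidean space with its standard inner product. In essence, the only work is to verify that a feature map $\Phi:\mathcal{X}\to\mathbb{R}^n$ gives rise to a legitimate positive definite symmetric kernel together with an RKHS whose norm matches the Euclidean norm on the weight vector $\bm{a}$, so that the hypothesis class in the corollary is literally an instance of the hypothesis class in the lemma.

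First, I would define $K:\mathcal{X}\times\mathcal{X}\to\mathbb{R}$ by $K(x,y)=\Phi(x)^{\T}\Phi(y)$ and check its two required properties. Symmetry is immediate from commutativity of the Euclidean inner product. Positive definiteness follows from the identity
\begin{align}
\sum_{i,j=1}^{k} c_i c_j K(x_i,x_j) \;=\; \Bigl\lVert \sum_{i=1}^{k} c_i \Phi(x_i) \Bigr\rVert_2^2 \;\geq\; 0
\end{align}
for any finite collection $x_1,\ldots,x_k\in\mathcal{X}$ and $c_1,\ldots,c_k\in\mathbb{R}$. Hence $K$ qualifies as the kind of kernel to which the preceding lemma applies.

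Second, I would identify the RKHS $\mathbb{H}$ associated with $K$ as the closure of the span of $\{\Phi(x):x\in\mathcal{X}\}$ inside $\mathbb{R}^n$, equipped with the standard Euclidean inner product, and take $\Phi$ itself as the canonical feature mapping. Under this identification, $\langle \bm{a},\Phi(x)\rangle_{\mathbb{H}}=\bm{a}^{\T}\Phi(x)$ and $\lVert \bm{a}\rVert_{\mathbb{H}}=\sqrt{\bm{a}^{\T}\bm{a}}$, so the constraint $\lVert \bm{a}\rVert_{\mathbb{H}}\leq\Lambda$ in the lemma coincides with the constraint $\bm{a}^{\T}\bm{a}\leq\Lambda^2$ in the corollary, and the kernel matrix $\mathbf{K}$ with entries $K(x_i,x_j)=\Phi(x_i)^{\T}\Phi(x_j)$ is exactly the Gram matrix mentioned in both statements.

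Third, with this dictionary in place, applying the preceding lemma directly gives
\begin{align}
\widehat{\mathfrak{R}}_S(\mathcal{H}) \;\leq\; \frac{\Lambda\sqrt{\operatorname{Tr}[\mathbf{K}]}}{m},
\end{align}
which is the bound asserted in the corollary. There is no genuine obstacle here; the only point requiring care is the bookkeeping that confirms the Euclidean setup is a bona fide RKHS so that the lemma's hypotheses are satisfied verbatim. Consequently, the corollary is proved in a single invocation of the preceding lemma.
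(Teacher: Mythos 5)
Your proposal is correct and matches the paper's treatment: the paper simply states that the RKHS lemma ``immediately implies'' the corollary, and your argument supplies exactly the routine verification (that $\Phi(x)^{\T}\Phi(y)$ is a positive definite symmetric kernel whose RKHS is the Euclidean span of the features, with $\|\bm{a}\|_{\mathbb{H}}=\sqrt{\bm{a}^{\T}\bm{a}}$) that justifies that claim. No issues.
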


To bound the Rademacher complexity of our model, we thus want to bound $\operatorname{Tr}[\mathbf{K}]$ and use Corollary \ref{cor:rademacher}.
In our case, the kernel matrix elements can be calculated as
\begin{align}
    K_{ij} 
    &= \sum_{k=0}^{2K}x_k(H_i,\rho_i)x_k(H_j,\rho_j),
\end{align}
where $x_k$ is defined as \eqref{eq:fourier-feature}.
Since $\|\cos(A)\|\leq 1$ and $\|\sin(A)\|\leq 1$ for any Hermitian operator $A$, $\Tr[\cos(k\pi H/C)\rho] \leq \Tr[\rho] = 1$ and $\Tr[\sin(k\pi H/C)\rho] \leq \Tr[\rho] = 1$ hold for any state $\rho$.
Thus, $|x_k(H,\rho)|\leq 1$ for any Hamiltonian $H$, state $\rho$, and integer $k$. Consequently, $K_{ij} \leq 2K+1$.
Furthermore, since in our case $\mathbf{K}$ is $N_d\times N_d$ matrix, we have $\operatorname{Tr}[\mathbf{K}] \leq (2K+1)N_d$, which in turn implies $\widehat{\mathfrak{R}}_S(\mathcal{H})\leq W\sqrt{\frac{2K+1}{N_d}}$ with the assumption $\bm{w}^\T \bm{w}\leq W^2$.

We then bound $|h(x) - y|$.
First, $\mathcal{Y}$ in our setting is defined by $\mathrm{Tr}[f(H)\rho]$ which can be bounded as $\mathrm{Tr}[f(H)\rho]\leq \|f(H)\| \leq \|f\|_\infty$.
This implies $y\leq \|f\|_\infty$.
Second, when $\bm{w}^\T\bm{w}\leq W^2$, the model $g(H,\rho;\bm{w})=\sum_{k=0}^{2K} w_k x_k(H,\rho)$ satisfies $|g(H,\rho;\bm{w})|\leq W\sqrt{2K+1} $ from the well-known inequality $\|\bm{w}\|_1 \leq \sqrt{2K+1} \|\bm{w}\|_2$ and that $|\mathrm{Tr}[\cos(k\pi H/C)\rho]|\leq 1$ and $|\mathrm{Tr}[\sin(k\pi H/C)\rho]|\leq 1$.
These two allow us to establish a bound
\begin{align} \label{eq:model-output-bound}
    |h(x) - y| \leq W\sqrt{2K+1}+\|f\|_\infty
\end{align}
by the triangle inequality.

Let us now summarize the discussion as the following lemma, which can be obtained by combining the bound on $\widehat{\mathfrak{R}}_S(\mathcal{H})$ and $|h(x) - y|$ with Lemma \ref{lemma:generalization-bound}:
\begin{lemma}\label{lemma:generalization-bound-this-work}
    Let $g(H,\rho;\bm{w})=\sum_{k=0}^{2K} w_k x_k(H,\rho)$ and $L(y, y')=|y-y'|^2$ be the loss function.
    Let $p(H,\rho)$ be the probability distribution of Hamiltonians and states satisfying $\|H\|\leq C$.
    Define the expected loss as $R(\bm{w}) = \mathbb{E}_{H,\rho\sim p(H,\rho)}\left[L(\mathrm{Tr}[f(H)\rho], g(H,\rho;\bm{w}))\right]$.
    For samples $\{(H_i,\rho_i)\}_{i=1}^{N_d}$ drawn independently from $p(H,\rho)$, with probability at least $1-\delta$, we have
    \begin{align}\label{eq:generalization-error-bound}
        R(\bm{w}) - \hat{R}(\bm{w}) \leq 4(W\sqrt{2K+1} +\|f\|_\infty)W\sqrt{\frac{2K+1}{N_d}} + 3(W\sqrt{2K+1} +\|f\|_\infty)^2\sqrt{\frac{\log \frac{2}{\delta}}{2N_d}},
    \end{align}
    where $\hat{R}(\bm{w}) = \frac{1}{N_d}\sum_{j=1}^{N_d} L(y_j, g(H_j,\rho_j;\bm{w}))$ is the empirical loss, for all $\bm{w}\in \mathbb{R}^{2K+1}$ such that $\bm{w}^\T\bm{w}\leq W^2$.
\end{lemma}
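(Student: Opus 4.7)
The lemma is an immediate consequence of Mohri's generalization bound (Lemma~\ref{lemma:generalization-bound}) once we plug in the two ingredients that the preceding discussion has already established: a uniform range bound on $|g(H,\rho;\bm{w}) - \mathrm{Tr}[f(H)\rho]|$ and a bound on the empirical Rademacher complexity of the constrained linear hypothesis class. The plan is therefore to assemble these pieces rather than to prove anything new.

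Concretely, I will instantiate Lemma~\ref{lemma:generalization-bound} with input space equal to the support of $p(H,\rho)$, labels $y = \mathrm{Tr}[f(H)\rho]$, loss exponent $p=2$, hypothesis class
\begin{align}
    \mathcal{H} = \left\{(H,\rho)\mapsto \bm{w}^{\T}\bm{x}(H,\rho) : \bm{w}^{\T}\bm{w}\leq W^2\right\},
\end{align}
and range constant $M = W\sqrt{2K+1} + \|f\|_\infty$; that $M$ is valid uniformly on $\mathcal{H}$ is exactly the content of \eqref{eq:model-output-bound}. For the Rademacher term, I will invoke Corollary~\ref{cor:rademacher} with feature map $\bm{x}(\cdot,\cdot)$ and $\Lambda=W$, combined with the pointwise bound $|x_k(H,\rho)|\leq 1$, to deduce $\mathrm{Tr}[\mathbf{K}]\leq (2K+1)N_d$ and hence $\widehat{\mathfrak{R}}_S(\mathcal{H}) \leq W\sqrt{(2K+1)/N_d}$. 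Substituting both into Lemma~\ref{lemma:generalization-bound} yields $R(\bm{w}) - \hat{R}(\bm{w}) \leq 4M\,\widehat{\mathfrak{R}}_S(\mathcal{H}) + 3M^2\sqrt{\log(2/\delta)/(2N_d)}$, and expanding $M$ reproduces exactly \eqref{eq:generalization-error-bound}.

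Because every nontrivial estimate has already been carried out in the paragraphs above and the two invocations of \cite{mohri2018foundations} are black-box, there is essentially no obstacle to overcome. The only point that warrants a moment of care is the quantifier: Lemma~\ref{lemma:generalization-bound} is uniform over $h\in\mathcal{H}$ with probability $1-\delta$ over a single draw of the sample, so on that high-probability event the bound holds for every $\bm{w}$ in the $W$-ball simultaneously, which is precisely the ``for all $\bm{w}$'' statement required in Lemma~\ref{lemma:generalization-bound-this-work}.
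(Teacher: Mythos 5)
Your proposal is correct and follows essentially the same route as the paper's own proof: instantiate Lemma~\ref{lemma:generalization-bound} with $p=2$ and $M=W\sqrt{2K+1}+\|f\|_\infty$, and bound the Rademacher complexity via Corollary~\ref{cor:rademacher} using $|x_k(H,\rho)|\leq 1$ to get $\operatorname{Tr}[\mathbf{K}]\leq (2K+1)N_d$. Nothing is missing; your remark about the uniformity of the bound over the $W$-ball is the correct reading of the quantifiers.
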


We now show that the empirical loss can be made small using the model $g(H,\rho;\bm{w})=\sum_{k=0}^K w_k x_k(H,\rho;\bm{w})$.
Let $\lambda_l$ be eigenvalues of Hamiltonian $H$ and $\ket{\lambda_l}$ be corresponding eigenstates.
Let $p_l = \braket{\lambda_l|\rho|\lambda_l}$.
Then, $\mathrm{Tr}[f(H)\rho] = \sum_{l=0}^{2^n-1} f(\lambda_l)p_l$ and  $g(H,\rho;\bm{w})=\sum_{l=0}^{2^n-1} \left[\sum_{k=0}^K w_{2k}\cos(k\pi \lambda_l/C) + \sum_{k=0}^{K-1} w_{2k+1}\sin(k\pi \lambda_l/C)\right]p_l$.
The empirical loss we minimize can then be written as
\begin{align}
    \hat{R}(\bm{w}) 
    &= \frac{1}{N_d}\sum_{i=1}^{N_d} \left(\sum_{l=0}^{2^n-1} f(\lambda_{i,l})p_l - \sum_{l=0}^{2^n-1} \left[\sum_{k=0}^K w_{2k}\cos(k\pi \lambda_{i,l}/C) + \sum_{k=0}^{K-1} w_{2k+1}\sin(k\pi \lambda_{i,l}/C)\right]p_l\right)^2 \\
    &= \frac{1}{N_d}\sum_{i=1}^{N_d} \left[\sum_{l=0}^{2^n-1} p_l\left(f(\lambda_{i,l}) - \left[\sum_{k=0}^K w_{2k}\cos(k\pi \lambda_{i,l}/C) + \sum_{k=0}^{K-1} w_{2k+1}\sin(k\pi \lambda_{i,l}/C)\right]\right)\right]^2 \label{eq:empirical-loss-decomposed}
\end{align}
From the assumption of the theorem, $\hat{R}(\bm{w}^*) \leq \hat{R}(\bm{w})$ for all $\bm{w}$ such that $\bm{w}^\T\bm{w}\leq W^2$, because $\bm{w}^*$ is a minimizer of $\hat{R}(\bm{w}^*)$.
Therefore, $\hat{R}(\bm{w}^*) \leq \hat{R}(\bm{c}_K)$.
Moreover, $\hat{R}(\bm{c}_K)\leq\varepsilon_K^2$ from the assumption of the theorem that $\left|f(x) - \left[\sum_{k=0}^K w_{2k}\cos(k\pi x/C) + \sum_{k=0}^{K-1} w_{2k+1}\sin(k\pi x/C)\right]\right|\leq \varepsilon_K$.
Thus, $\hat{R}(\bm{w}^*) \leq \varepsilon_K^2$.
Combining this with the generalization error bound \eqref{eq:generalization-error-bound}, we have proved Theorem~\ref{thm:quantum-easiness}.

\section{Learnability when features are estimated with statistical noise}
\label{sec:proof-statistical-easy}

Here, we prove the easiness of the ML task when the statistical noise affects the elements of Hamiltonian Fourier features $\bm{x}$.
First, note that we can efficiently obtain an estimate $\tilde{\bm{x}}$ of the feature vector $\bm{x}$ such that each element satisfies $|\tilde{x}_k-x_k|\leq \eta$ for some small $\eta>0$ with $\mathcal{O}(1/\eta^2)$ shots with high probability as we will discuss rigorously later.
This means that the remaining challenge in proving easiness is to show that the expected loss can still be made small when each element \( x_k \) of the feature vector \( \bm{x} \) is perturbed by an error of at most \( \eta \).

To prove this, we first analyze how the empirical loss changes when clean features $\bm{x}$ are replaced by their noisy estimates $\tilde{\bm{x}}$, and we show that the resulting difference in loss is bounded in terms of the noise level $\eta$. 
This is done by deriving a deterministic bound on the deviation between the empirical losses with and without noise. 
Then, we extend this result to the expected loss using standard generalization bounds. 
Finally, we combine this with an analysis of the number of quantum measurements, or shots, required to ensure that the estimation error $\epsilon_k$ for each feature remains within $\eta$ with high probability. 
This leads to a complete bound on the expected loss achieved by minimizing the empirical loss over the noisy features, thereby demonstrating learnability under statistical noise.

\begin{lemma}
\label{lemma:stat-error-1}
Let the notations follow those of Theorem \ref{thm:quantum-easiness}.
Furthermore, let $\tilde{\bm x}=\bm x+\bm\epsilon$ be noisy feature vector, where $\epsilon_k$, an element of $\bm\epsilon$, satisfies $|\epsilon_k|\leq \eta$.
Let $\tilde R(\bm w):=\frac{1}{N_d} \sum_{i=1}^{N_d}\left(y_i-\bm w^{\T}\tilde{\bm x}_i\right)^2$.
Then,
\begin{align}
    |\hat R(\bm w)-\tilde R(\bm w)| \leq 2\eta W\sqrt{2K+1} \left(\|f\|_\infty+W\sqrt{2K+1}\right)
    +\eta^{2}W^{2}(2K+1)
\end{align}
holds for any $\bm{w}$.
\end{lemma}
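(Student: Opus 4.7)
The plan is a direct algebraic comparison of the two empirical losses, term by term, followed by two elementary bounds that were already used in the proof of Theorem \ref{thm:quantum-easiness}.

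First I would write $\tilde{\bm{x}}_i = \bm{x}_i + \bm{\epsilon}_i$ inside the squared residual and expand, so that for each sample
\begin{align}
(y_i - \bm{w}^\T\tilde{\bm{x}}_i)^2
&= (y_i - \bm{w}^\T\bm{x}_i - \bm{w}^\T\bm{\epsilon}_i)^2 \notag\\
&= (y_i - \bm{w}^\T\bm{x}_i)^2 - 2(y_i - \bm{w}^\T\bm{x}_i)(\bm{w}^\T\bm{\epsilon}_i) + (\bm{w}^\T\bm{\epsilon}_i)^2.
\end{align}
Subtracting $(y_i - \bm{w}^\T\bm{x}_i)^2$ and averaging over $i$ gives $\tilde{R}(\bm{w}) - \hat{R}(\bm{w})$ as a sum of a cross term and a pure noise term. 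Applying the triangle inequality,
\begin{align}
|\hat{R}(\bm{w}) - \tilde{R}(\bm{w})| \leq \frac{1}{N_d}\sum_{i=1}^{N_d}\Bigl[2\,|y_i - \bm{w}^\T\bm{x}_i|\,|\bm{w}^\T\bm{\epsilon}_i| + (\bm{w}^\T\bm{\epsilon}_i)^2\Bigr].
\end{align}

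Next I would plug in two uniform bounds valid for every $i$. For the residual, Eq.~\eqref{eq:model-output-bound} already gives $|y_i - \bm{w}^\T\bm{x}_i| \leq \|f\|_\infty + W\sqrt{2K+1}$ whenever $\bm{w}^\T\bm{w}\leq W^2$. For the noise inner product, Cauchy--Schwarz combined with the assumption $|\epsilon_{i,k}|\leq \eta$ (so that $\|\bm{\epsilon}_i\|_2 \leq \eta\sqrt{2K+1}$) yields $|\bm{w}^\T\bm{\epsilon}_i| \leq \|\bm{w}\|_2\,\|\bm{\epsilon}_i\|_2 \leq \eta W\sqrt{2K+1}$. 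Substituting these bounds into the inequality above and noting that the right-hand side no longer depends on $i$, the average collapses and delivers exactly
\begin{align}
|\hat{R}(\bm{w}) - \tilde{R}(\bm{w})| \leq 2\eta W\sqrt{2K+1}\bigl(\|f\|_\infty + W\sqrt{2K+1}\bigr) + \eta^2 W^2(2K+1),
\end{align}
which is the claim.

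The argument contains no real obstacle; the only points that require care are (i) not dropping the quadratic-in-$\eta$ term when expanding the square, since we want a deterministic bound valid for any $\eta$, not only small $\eta$, and (ii) making sure the bound $|y_i - \bm{w}^\T\bm{x}_i| \leq \|f\|_\infty + W\sqrt{2K+1}$ is invoked with the constraint $\bm{w}^\T\bm{w}\leq W^2$, which is part of the setup inherited from Theorem~\ref{thm:quantum-easiness}. The resulting lemma is deterministic in $\bm{\epsilon}$; the later extension to the expected loss and to the number of shots needed to guarantee $|\epsilon_k|\leq \eta$ with high probability (via Hoeffding on the estimators of $x_{\cos,l}$ and $x_{\sin,l}$) can then be done separately.
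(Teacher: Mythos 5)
Your proposal is correct and follows essentially the same route as the paper's proof: expand the squared residual with $\tilde{\bm{x}}_i=\bm{x}_i+\bm{\epsilon}_i$, isolate the cross term and the quadratic noise term, and bound them uniformly via $|y_i-\bm{w}^\T\bm{x}_i|\leq\|f\|_\infty+W\sqrt{2K+1}$ and $|\bm{w}^\T\bm{\epsilon}_i|\leq\eta W\sqrt{2K+1}$ from Cauchy--Schwarz. No gaps; your explicit caveats about keeping the $\eta^2$ term and invoking the constraint $\bm{w}^\T\bm{w}\leq W^2$ are exactly the points the paper's argument relies on.
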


\begin{proof}
For any $\bm{w}$,
\begin{align}
  \hat R(\bm w)-\tilde R(\bm w)
  &=\frac{1}{N_d}\sum_{i=1}^{N_d}[(y_i-\bm{w}^\T\bm{x}_i)^2 - (y_i-\bm{w}^\T\tilde{\bm{x}_i})^2]. \\
  &=\frac{1}{N_d}\sum_{i=1}^{N_d}[(y_i-\bm{w}^\T\bm{x}_i)^2 - ((y_i-\bm{w}^\T\bm{x}_i)-\bm{w}^\T\bm{\epsilon}_i)^2]. \\
  &= \frac{1}{N_d}\sum_{i=1}^{N_d}[2(y_i-\bm{w}^\T\bm{x}_i)\bm{w}^\T\bm{\epsilon}_i-(\bm{w}^\T\bm{\epsilon}_i)^2].\label{eq:stat-error-lamma-1}
\end{align}
Because \(|y_i|\le \|f\|_{\infty}\) and \(|x_{ik}|\le1\),
\begin{align}\label{eq:stat-error-lamma-2}
    \bigl|y_i-\bm w^{\mathsf T}\bm x_i\bigr|
  \;\le\;
  \|f\|_\infty +\|\bm w\|_2\|\bm x_i\|_2
  \;\le\;
  \|f\|_\infty + W\sqrt{2K+1}.
\end{align}
Also, from the assumption on $\bm{\epsilon}$,
\begin{align}\label{eq:stat-error-lamma-3}
      |\bm{w}^\T\bm{\epsilon}_i| \leq 
  \|\bm w\|_2\|\bm\varepsilon_i\|_2
  \leq
  \eta W\sqrt{2K+1}.
\end{align}
Eqs. \eqref{eq:stat-error-lamma-1}, \eqref{eq:stat-error-lamma-2} and \eqref{eq:stat-error-lamma-3} implies the theorem.

\end{proof}

This lemma immedeately implies the following:

\begin{lemma}\label{lemma:stat-error-2}
Let the notations follow those of \ref{lemma:stat-error-1}.
With probability at least \(1-\delta\), the following holds:
\begin{align}
    \begin{split}
        |R(\bm{w}) - \tilde{R}(\bm{w})| &\leq 4(W\sqrt{2K+1} +\|f\|_\infty)W\sqrt{\frac{2K+1}{N_d}} + 3(W\sqrt{2K+1} +\|f\|_\infty)^2\sqrt{\frac{\log \frac{2}{\delta}}{2N_d}} \\
    &\quad \quad + 2\eta W\sqrt{2K+1} \left(\|f\|_\infty+W\sqrt{2K+1}\right)
    +\eta^{2}W^{2}(2K+1),
    \end{split}
\end{align}
\end{lemma}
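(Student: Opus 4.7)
The plan is to decompose the quantity $|R(\bm{w}) - \tilde{R}(\bm{w})|$ via the triangle inequality,
\begin{align}
|R(\bm{w}) - \tilde{R}(\bm{w})| \leq |R(\bm{w}) - \hat{R}(\bm{w})| + |\hat{R}(\bm{w}) - \tilde{R}(\bm{w})|,
\end{align}
and then bound each of the two resulting pieces by invoking an already-established result. The first piece captures the usual generalization gap between the empirical risk computed on noiseless features and the true expected risk, while the second piece captures the perturbation induced by replacing the clean feature vector $\bm{x}$ by its noisy estimate $\tilde{\bm{x}}$.

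First, I would apply Lemma~\ref{lemma:generalization-bound-this-work} to control $|R(\bm{w}) - \hat{R}(\bm{w})|$. This lemma was proved via Rademacher complexity arguments on the class of linear models with $\bm{w}^{\T}\bm{w}\leq W^2$, together with the uniform bound $|h(x)-y|\leq W\sqrt{2K+1}+\|f\|_\infty$ from Eq.~\eqref{eq:model-output-bound}. It yields, with probability at least $1-\delta$ over the draw of $\{(H_i,\rho_i)\}_{i=1}^{N_d}$, exactly the first two terms appearing on the right-hand side of the claimed bound.

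Second, I would apply Lemma~\ref{lemma:stat-error-1}, which is a deterministic statement relating $\hat{R}(\bm{w})$ to $\tilde{R}(\bm{w})$ whenever each entry of $\bm\epsilon = \tilde{\bm{x}} - \bm{x}$ satisfies $|\epsilon_k|\leq \eta$. That lemma contributes the remaining two terms, namely $2\eta W\sqrt{2K+1}(\|f\|_\infty + W\sqrt{2K+1})$ and $\eta^2 W^2(2K+1)$. Since this bound holds deterministically for every realization, no additional union bound with respect to $\delta$ is required: the failure probability comes entirely from the generalization step.

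Combining the two bounds through the triangle inequality gives the claimed inequality verbatim. There is essentially no technical obstacle here; the lemma is a bookkeeping statement that packages Lemmas~\ref{lemma:generalization-bound-this-work} and~\ref{lemma:stat-error-1} together. The only thing to be mindful of is that the bound must hold for a given $\bm{w}$ with $\bm{w}^{\T}\bm{w}\leq W^2$ (so that the hypotheses of both lemmas are met), and that the probabilistic and deterministic components are kept cleanly separated so that the $1-\delta$ confidence level is not diluted.
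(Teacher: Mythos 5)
Your proposal is correct and matches the paper's own proof exactly: the paper also writes $R(\bm{w})-\tilde{R}(\bm{w})=[R(\bm{w})-\hat{R}(\bm{w})]+[\hat{R}(\bm{w})-\tilde{R}(\bm{w})]$ and invokes Lemma~\ref{lemma:generalization-bound-this-work} for the first term and Lemma~\ref{lemma:stat-error-1} for the second. Your added remark that the failure probability comes entirely from the generalization step is a correct and slightly more explicit observation than the paper makes.
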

\begin{proof}
First, write
\begin{align}
R(\bm{w})-\tilde R(\bm{w})
  =[R(\bm{w})-\hat R(\bm{w})]
  +[\hat R(\bm{w})-\tilde R(\bm{w})].
\end{align}
We use Lemma \ref{lemma:generalization-bound-this-work} for the first term and Lemma \ref{lemma:stat-error-1} for the second term to bound $|R(\bm{w})-\tilde R(\bm{w})|$.
\end{proof}

Using the previous results, we now show that the expected loss $R(\tilde{\bm{w}}^*)$ of the weight vector $\tilde{\bm{w}}^*$ trained on noisy features remains small. In other words, we formally demonstrate that learnability is preserved even in the presence of statistical noise in the features, as stated in the following lemma.

\begin{lemma}
    Let the notations follow those of \ref{lemma:stat-error-1}.
    Let $\tilde{\bm{w}}^*$ be the minimizer of $\tilde{R}(\bm{w})$. With probability at least $1-\delta$, the following holds:
    \begin{align}
        \begin{split}
        R(\tilde{\bm{w}}^*) &\leq \varepsilon_K^2 + 4(W\sqrt{2K+1} +\|f\|_\infty)W\sqrt{\frac{2K+1}{N_d}} + 3(W\sqrt{2K+1} +\|f\|_\infty)^2\sqrt{\frac{\log \frac{2}{\delta}}{2N_d}} \\
        &\quad \quad + 4\eta W\sqrt{2K+1} \left(\|f\|_\infty+W\sqrt{2K+1}\right)
        +2\eta^{2}W^{2}(2K+1).
        \end{split}
    \end{align}
\end{lemma}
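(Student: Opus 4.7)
The plan is to combine the two preceding lemmas with the constrained optimality of $\tilde{\bm{w}}^*$, using the Fourier coefficient vector $\bm{c}_K = (c_0, \ldots, c_{2K})^\T$ (which satisfies $\bm{c}_K^\T \bm{c}_K \leq W^2$ by assumption) as a comparison point, much as the proof of Theorem~\ref{thm:quantum-easiness} does.

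First, I would apply Lemma~\ref{lemma:stat-error-2} at the specific weight vector $\bm{w} = \tilde{\bm{w}}^*$ to pass from the true expected loss $R(\tilde{\bm{w}}^*)$ to the noisy empirical loss $\tilde{R}(\tilde{\bm{w}}^*)$ at the cost of the generalization term plus one copy of the noise term from Lemma~\ref{lemma:stat-error-1}. Next, since $\tilde{\bm{w}}^*$ is by definition the minimizer of $\tilde{R}$ over $\bm{w}^\T \bm{w}\leq W^2$ and $\bm{c}_K$ lies in that feasible set, I would replace $\tilde{R}(\tilde{\bm{w}}^*)$ with $\tilde{R}(\bm{c}_K)$, which can only enlarge the right-hand side.

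Then I would invoke Lemma~\ref{lemma:stat-error-1} a second time, this time at $\bm{w} = \bm{c}_K$, to bound $\tilde{R}(\bm{c}_K) \leq \hat{R}(\bm{c}_K)$ plus another copy of the deterministic noise term. This is where the factor-of-two doubling of the $\eta$-dependent terms in the final bound arises: one copy comes from $R\mapsto\tilde R$ at $\tilde{\bm{w}}^*$ (inside Lemma~\ref{lemma:stat-error-2}), and the other from $\tilde R\mapsto \hat R$ at $\bm{c}_K$. Finally, I would reuse the calculation performed at the end of the proof of Theorem~\ref{thm:quantum-easiness}: the hypothesis on $\{c_k\}$ gives $\hat{R}(\bm{c}_K)\leq \varepsilon_K^2$ directly, via Eq.~\eqref{eq:empirical-loss-decomposed} and the pointwise approximation bound $\varepsilon_K$.

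Concatenating these four steps yields exactly the stated bound, with the $\varepsilon_K^2$ from the approximation error, the generalization terms inherited unchanged from Lemma~\ref{lemma:stat-error-2}, and the doubled noise terms from applying Lemma~\ref{lemma:stat-error-1} twice. There is no genuine obstacle here beyond bookkeeping; the only subtle point is to make sure the single failure probability $\delta$ is spent exclusively on the generalization step inside Lemma~\ref{lemma:stat-error-2}, since Lemma~\ref{lemma:stat-error-1} is a deterministic inequality holding for every realization of the features, so no union bound is needed. The chain of inequalities holds on the same high-probability event that Lemma~\ref{lemma:stat-error-2} provides, giving the result with probability at least $1-\delta$.
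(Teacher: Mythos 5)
Your proposal is correct and follows essentially the same route as the paper: one application of Lemma~\ref{lemma:stat-error-2}, the constrained optimality of $\tilde{\bm{w}}^*$, a second application of Lemma~\ref{lemma:stat-error-1}, and the bound $\hat{R}(\bm{c}_K)\leq\varepsilon_K^2$, with the two $\eta$-terms arising exactly as you describe. The only (immaterial) difference is that you compare $\tilde{R}(\tilde{\bm{w}}^*)$ directly to $\tilde{R}(\bm{c}_K)$, whereas the paper first passes through $\tilde{R}(\bm{w}^*)\leq\hat{R}(\bm{w}^*)+{}$noise and then uses $\hat{R}(\bm{w}^*)\leq\hat{R}(\bm{c}_K)$; both steps are justified since $\bm{c}_K$ lies in the feasible set.
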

\begin{proof} 
    By Lemma \ref{lemma:stat-error-2}, 
    \begin{align}\label{eq:stat-error-easiness-1}
    \begin{split}
        R(\tilde{\bm{w}}^*) &\leq \tilde{R}(\tilde{\bm{w}}^*) + 4(W\sqrt{2K+1} +\|f\|_\infty)W\sqrt{\frac{2K+1}{N_d}} + 3(W\sqrt{2K+1} +\|f\|_\infty)^2\sqrt{\frac{\log \frac{2}{\delta}}{2N_d}} \\
    &\quad \quad + 2\eta W\sqrt{2K+1} \left(\|f\|_\infty+W\sqrt{2K+1}\right)
    +\eta^{2}W^{2}(2K+1).
    \end{split}
    \end{align}
    Now, note that 
    \begin{align}\label{eq:stat-error-easiness-2}
        \tilde{R}(\tilde{\bm{w}}^*)\leq\tilde{R}(\bm{w}^*)
    \end{align} 
    by defintion of $\tilde{\bm{w}}^*$.
    By Lemma \ref{lemma:stat-error-1}, we have 
    \begin{align}\label{eq:stat-error-easiness-3}
        \tilde{R}(\bm{w}^*) \leq \hat{R}(\bm{w}^*) + 2\eta W\sqrt{2K+1} \left(\|f\|_\infty+W\sqrt{2K+1}\right)+\eta^{2}W^{2}(2K+1).
    \end{align}
    From the assumption of the theorem, $\hat{R}(\bm{w}^*) \leq \hat{R}(\bm{w})$ for all $\bm{w}$ such that $\bm{w}^\T\bm{w}\leq W^2$, because $\bm{w}^*$ is a minimizer of $\hat{R}(\bm{w}^*)$.
    Therefore, $\hat{R}(\bm{w}^*) \leq \hat{R}(\bm{c}_K)$.
    Moreover, $\hat{R}(\bm{c}_K)\leq\varepsilon_K^2$ from the assumption of the theorem that $\left|f(x) - \left[\sum_{k=0}^K w_{2k}\cos(k\pi x/C) + \sum_{k=0}^{K-1} w_{2k+1}\sin(k\pi x/C)\right]\right|\leq \varepsilon_K$.
    Thus, 
    \begin{align}\label{eq:stat-error-easiness-4}
        \hat{R}(\bm{w}^*) \leq \varepsilon_K^2.
    \end{align}
    Combining \eqref{eq:stat-error-easiness-1}, \eqref{eq:stat-error-easiness-2}, \eqref{eq:stat-error-easiness-3} and \eqref{eq:stat-error-easiness-4}, we obtain the lemma.
\end{proof}

We now discuss the number of shots needed to suppress the statistical error $\epsilon_k$ to $\epsilon_k\leq \eta$ for all $k$ with high probability, say $1-\delta$.
To do so, we must fix the technique we use to estimate $x_k$.
Here, we assume the use of Hadamard test and consequently the following strategy to estimate $x_k$.
To estimate $x_{\cos, k}$, for a given Hamiltonian $H$ and a state $\rho$, we execute the circuit in Fig. \ref{fig:hadamard-test} repeatedly for $N_{\rm shot}$ times with $t=k\pi/C$.
Define a random variable $z_m$ for $m=1,...,N_{\rm shot}$ that takes $z_m=+1$ if we obtain $\ket{+}$ and $z_m=-1$ if we obtain $\ket{-}$ at the $m$th trial.
Then, we construct an estimator 
\begin{align}\label{eq:hadamard-test-estimator}
    \tilde{x}_{\cos, k} := \frac{1}{N_{\rm shot}}\sum_{m=1}^{N_{\rm shot}} z_m.
\end{align}
It is well-known that $\tilde{x}_{\cos, k}$ is an unbiased estimator of $x_{\cos, k}$.
Hoeffding's inequality tells us that 
\begin{align}
    \epsilon_k = |\tilde{x}_{\cos, k} - x_{\cos, k}| \leq \eta
\end{align}
holds with probability at least $1-2\exp(-N_{\rm shot}\eta^2/2)$.
We can take the same strategy to construct an estimator $\tilde{x}_{\sin, k}$ for $x_{\sin, k}$ by changing the measurement basis to $\ket{\pm i}$ in Fig. \ref{fig:hadamard-test}.
Note that the order of statistical error remains same for the strategy taken in the main text, since $w_{\pm}$ and $w_{\pm i}$ are estimated as the probability of getting the corresponding states $\ket{\psi_{\pm}}$ and $\ket{\psi_{\pm i}}$ from $e^{-iHt}\ket{\psi_+}$ by sampling.

Now, we shall take $N_{\rm shot}$ for each $k$ to ensure a single $\tilde{x}_k$ have error at most $\eta$ with probability $1-\delta/(2K+1)$, which ensure all $2K+1$ estimators $\tilde{x}_k$ have error at most $\eta$ with probability $1-\delta$ by union bound.
$N_{\rm shot}$ needed to guarantee the above is $N_{\rm shot}=\mathcal{O}(\log(K/\delta)/\eta^2)$.
This discussion implies the following theorem.

\begin{theorem}
    Let the notations follow those of Theorem \ref{thm:quantum-easiness}.
    Let $\delta>0$ and $\eta>0$.
    Let $N_{\rm shot}=\mathcal{O}(\log(K/\delta)/\eta^2)$.
    Let $\tilde{\bm{x}}_i$ be an estimator of $\bm{x}_i$ obtained via performing $N_{\rm shot}$-shot Hadamard test as in Eq. \eqref{eq:hadamard-test-estimator}.
    Let $\tilde R(\bm w):=\frac{1}{N_d} \sum_{i=1}^{N_d}\left(y_i-\bm w^{\T}\tilde{\bm x}_i\right)^2$ and $\tilde{\bm{w}}^*$ be minimizer of $\tilde R(\bm w)$.
    Then, with probability $1-2\delta$, the following holds:
    \begin{align}
        \begin{split}
        R(\tilde{\bm{w}}^*) &\leq \varepsilon_K^2 + 4(W\sqrt{2K+1} +\|f\|_\infty)W\sqrt{\frac{2K+1}{N_d}} + 3(W\sqrt{2K+1} +\|f\|_\infty)^2\sqrt{\frac{\log \frac{2}{\delta}}{2N_d}} \\
        &\quad \quad + 4\eta W\sqrt{2K+1} \left(\|f\|_\infty+W\sqrt{2K+1}\right)
        +2\eta^{2}W^{2}(2K+1).
        \end{split}
    \end{align}
\end{theorem}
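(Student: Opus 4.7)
The plan is to combine a Hoeffding-style tail bound on the per-feature Hadamard-test estimators with the deterministic-plus-generalization bound already proved in the lemma immediately preceding the theorem. There are exactly two sources of randomness: the draw of the training sample $\{(H_i,\rho_i)\}_{i=1}^{N_d}$ from $p(H,\rho)$, and the finite-shot measurement noise in forming the estimates $\tilde{\bm{x}}_i$. A final union bound converts two independent $\delta$-failure events into the $2\delta$ failure probability stated in the theorem.

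First I would control the measurement noise. By construction, the per-shot outcomes $z_m\in\{-1,+1\}$ in Eq.~\eqref{eq:hadamard-test-estimator} are i.i.d.\ bounded random variables with mean exactly $x_{\cos,k}$ (and similarly, after changing the measurement basis to $\ket{\pm i}$, one obtains an unbiased estimator of $x_{\sin,k}$). Hoeffding's inequality therefore yields
\begin{align}
\Pr\!\left[|\tilde{x}_k-x_k|>\eta\right]\le 2\exp\!\left(-N_{\rm shot}\eta^2/2\right).
\end{align}
Setting $N_{\rm shot}=\mathcal{O}(\log(K/\delta)/\eta^2)$ as in the theorem statement makes this probability at most $\delta/(2K+1)$, and a union bound over the $2K+1$ components then guarantees $\|\bm{\epsilon}\|_\infty\le\eta$ with probability at least $1-\delta$, exactly as anticipated in the text preceding the statement.

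Second, conditional on the event $\|\bm{\epsilon}\|_\infty\le\eta$, the preceding lemma applies verbatim: it is proved under the deterministic assumption $|\epsilon_k|\le\eta$, and its right-hand side coincides with the bound asserted in the theorem. That lemma itself holds with probability at least $1-\delta$ over the data draw, a fact inherited through Lemma \ref{lemma:stat-error-2} from the Rademacher-based bound in Lemma \ref{lemma:generalization-bound-this-work}. One final union bound over the two failure events (data sampling and measurement noise) therefore yields the stated $1-2\delta$ confidence level and the exact right-hand side of the theorem.

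The only real bookkeeping subtlety is the scope of the Hoeffding union bound: strictly one needs $\|\bm{\epsilon}_i\|_\infty\le\eta$ simultaneously for all $N_d$ data points, which would add a $\log N_d$ factor inside the logarithm appearing in $N_{\rm shot}$; this is absorbed into the $\mathcal{O}(\cdot)$ and does not affect the polynomial scaling. Aside from this accounting, no nontrivial argument is required---the theorem is essentially a probabilistic wrapper that promotes the deterministic noise lemma via Hoeffding's inequality and a union bound.
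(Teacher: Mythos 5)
Your proposal is correct and follows essentially the same route as the paper: Hoeffding's inequality plus a union bound over the feature components to guarantee $|\epsilon_k|\le\eta$ with probability $1-\delta$, combined with the preceding deterministic-noise lemma (which carries its own $1-\delta$ data-sampling confidence), and a final union bound giving $1-2\delta$. Your remark about needing the union bound over all $N_d$ data points as well (adding a $\log N_d$ inside the logarithm in $N_{\rm shot}$) is a point the paper itself glosses over, and your handling of it is sound.
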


\begin{figure}
    \centering
    \includegraphics{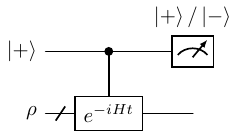}
    \caption{Quantum circuit of Hadamard test to estimate $x_{\cos, k}$. We first prepare the top qubit and $n$-qubit register in $\ket{+}$ and $\rho$ states, respectively. Then, after applying controlled time evolution, we measure the top qubit in $\ket{\pm}$ basis.}
    \label{fig:hadamard-test}
\end{figure}

\section{Details of experiments}\label{sec:demo-setup}

The status of the 40 qubits used in this experiment on \textit{ibmq\_marrakesh} was as follows: a median readout error of $8.79\times10^{-3}$, $\text{T}_1$ of 219.4 $\mathrm{\mu}$s, $\text{T}_2$ of 130.1 $\mathrm{\mu}$s for the qubits utilized in this study, and a median CZ gate error of $2.72\times10^{-3}$ for the employed qubit pairs.
We used logical-to-physical mapping of qubits as shown in Fig.~\ref{fig:layout}.
We transpiled logical circuits to executable circuits using Qiskit \cite{qiskit2024}.
To compute $x_{\mathrm{cos}, l}$ and $x_{\mathrm{sin}, l}$ for $l=0,...,11$, we heuristically set the number of Trotter steps, $n_{\rm step}$, to [1, 1, 1, 1, 1, 2, 2, 2, 2, 3, 3, 3] for 12 qubits, to [1, 1, 2, 2, 2, 2, 3, 3, 3, 3, 4, 4] for 32 qubits, and to [1, 1, 2, 2, 2, 3, 3, 4, 4, 4, 5, 5] for 40 qubits.
An example of the transpilation is shown in Fig.~\ref{fig:example_circuits}.

\begin{figure}
    \centering
    \includegraphics[width=1.0\linewidth]{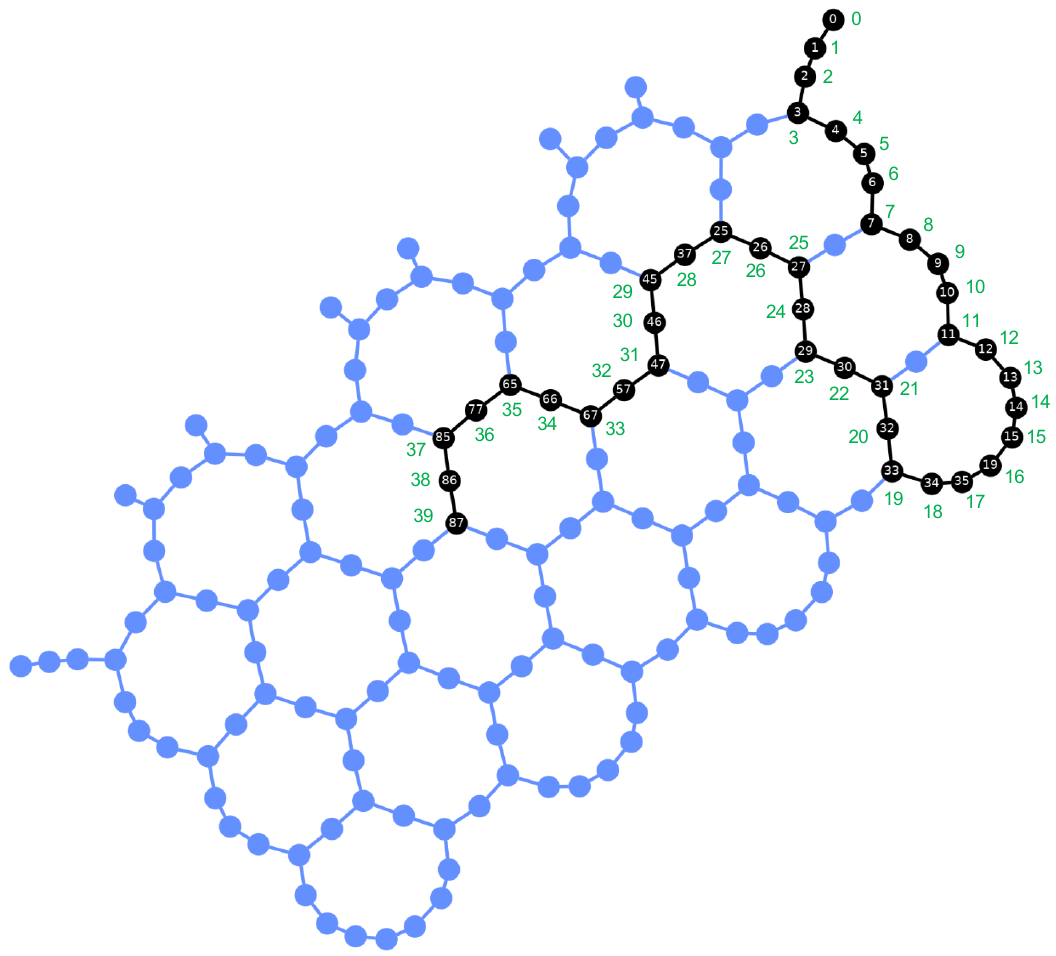}
    \caption{Mapping from logical qubits (green numbers) to physical qubits (white numbers) on \textit{ibmq\_marrakesh} for a 40-qubit configuration. The same mapping is used for the 12- and 32-qubit configurations.}
    \label{fig:layout}
\end{figure}

\begin{figure}
\centering
\subfloat[logical circuit\label{subfig:example_logical_circuit}]{
    \includegraphics[width=\textwidth]{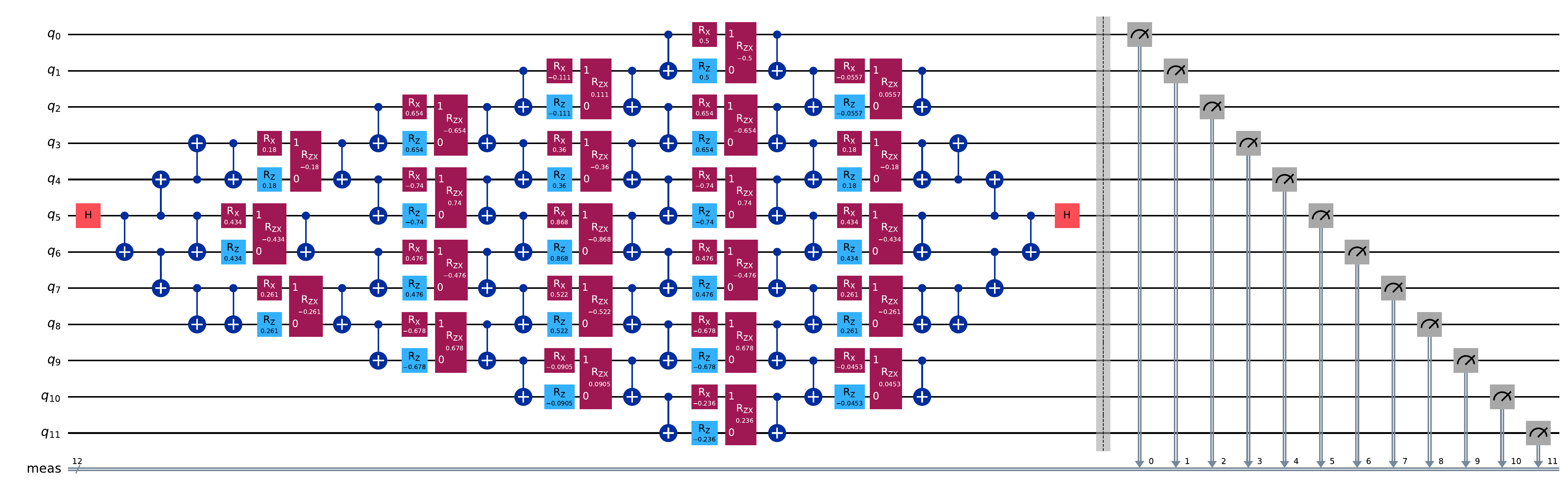}
}\\
\subfloat[physical circuit\label{subfig:example_physical_circuit}]{
    \includegraphics[width=\textwidth]{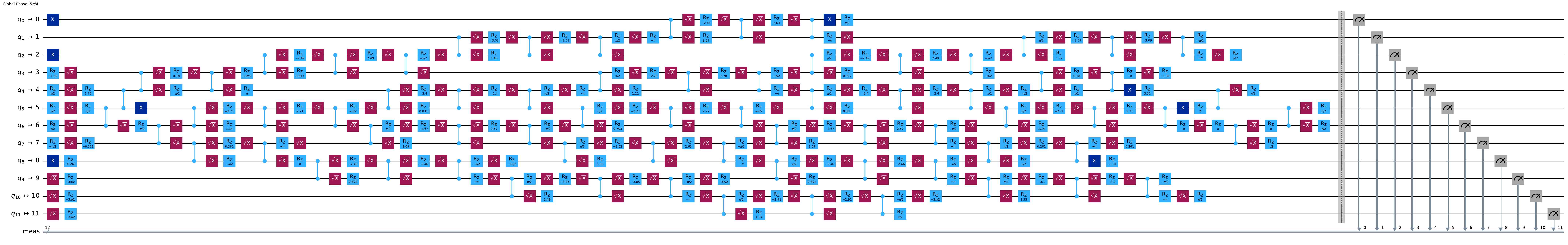}
}

\caption{
Examples of a logical circuit (before transpilation) and the corresponding physical circuit (after transpilation) used in this study for 12 qubits. The depicted quantum circuits compute the $l=5$th feature, corresponding to a Trotter step value of $n_{\text{step}}=2$.
}
\label{fig:example_circuits}
\end{figure}

The regression methods compared in this study are those provided by PyCaret, including Bayesian Ridge, Huber, Linear Regression, Gradient Boosting, Extra Trees, Ridge, Passive Aggressive, Decision Tree, Random Forest, AdaBoost, K Neighbors, Orthogonal Matching Pursuit, Light Gradient Boosting Machine, Lasso, Elastic Net, Lasso Least Angle Regression, and Least Angle Regression. 
These methods are evaluated using k-fold cross-validation with the number of folds set to the default value of 10. All other PyCaret settings are kept at their default values.

\end{document}